\documentclass[sigconf]{acmart}

\usepackage{algorithm}
\usepackage{algorithmic}

\usepackage{threeparttable}
\usepackage{multirow}
\usepackage{adjustbox}
\usepackage{wasysym} 
\usepackage{subcaption}
\usepackage{soul}
\usepackage{tabularx}
\usepackage{colortbl}
\usepackage{xcolor}
\usepackage{enumitem}
\newcommand{\cmark}{\CIRCLE}
\newcommand{\xmark}{\Circle}

\newtheorem{assumption}{Assumption}

\DeclareMathOperator*{\esssup}{ess\,sup}
\DeclareMathOperator*{\argmax}{arg\,max}

\newcommand{\R}{\mathbb{R}}
\newcommand{\E}{\mathbb{E}}
\newcommand{\Prb}{\mathbb{P}}
\newcommand{\ind}{\mathbf{1}}
\newcommand{\cX}{\mathcal{X}}
\newcommand{\cD}{\mathcal{D}}
\newcommand{\cQ}{\mathcal{Q}}

\newcommand{\asr}{\operatorname{ASR}}
\newcommand{\tgg}{\operatorname{TGG}}
\newcommand{\caacc}{\operatorname{CA}}

\AtBeginDocument{%
  }

\acmSubmissionID{919}

\begin{document}

\title{Lilith: Backdoor Generalization under Training–Inference Trigger Shift}

\author{Zhou Feng}
\orcid{0009-0006-5301-7019}
\affiliation{%
  \institution{Zhejiang University}
  \department{College of Computer Science and Technology}
  \city{Hangzhou}
  \country{China}
}
\email{zhou.feng@zju.edu.cn}

\author{Jiahao Chen}
\orcid{0000-0002-5894-662X}
\affiliation{%
  \institution{Zhejiang University}
  \department{College of Computer Science and Technology}
  \city{Hangzhou}
  \country{China}
}
\email{xaddwell@zju.edu.cn}

\author{Chunyi Zhou}
\orcid{0000-0003-0081-0946}
\affiliation{%
  \institution{Zhejiang University}
  \department{College of Computer Science and Technology}
  \city{Hangzhou}
  \country{China}
}
\email{zhouchunyi@zju.edu.cn}

\author{Yuan Su}
\orcid{0000-0002-5121-9331}
\affiliation{%
  \institution{Zhejiang University}
  \department{College of Computer Science and Technology}
  \city{Hangzhou}
  \country{China}
}
\email{yuansu@zju.edu.cn}

\author{Tianyu Du}
\orcid{0000-0003-0896-0690}
\affiliation{%
  \institution{Zhejiang University}
  \department{School of Software Technology}
  \city{Hangzhou}
  \country{China}
}
\email{zjradty@zju.edu.cn}

\author{Yuwen Pu}
\orcid{0000-0003-2311-4943}
\affiliation{%
  \institution{Chongqing University}
  \department{School of Big Data \& Software Engineering}
  \city{Chongqing}
  \country{China}
}
\email{yw.pu@cqu.edu.cn}

\author{Jianhai Chen}
\orcid{0000-0003-3524-3443}
\affiliation{%
  \institution{Zhejiang University}
  \department{College of Computer Science and Technology}
  \city{Hangzhou}
  \country{China}
}
\email{chenjh919@zju.edu.cn}

\author{Jinbao Li}
\orcid{0000-0002-2432-8807}
\authornotemark[1]
\affiliation{%
  \institution{Qilu University of Technology (Shandong Academy of Science)}
  \city{Jinan}
  \country{China}
}
\email{lijb@qlu.edu.cn}

\author{Shouling Ji}
\orcid{0000-0003-4268-372X}
\authornote{Corresponding authors.}
\affiliation{%
  \institution{Zhejiang University}
  \department{College of Computer Science and Technology}
  \city{Hangzhou}
  \country{China}
}
\email{sji@zju.edu.cn}

\renewcommand{\shortauthors}{Zhou Feng et al.}
\acmArticleType{Review}
\acmContributions{BT and GKMT designed the study; LT, VB, and AP
  conducted the experiments, BR, HC, CP and JS analyzed the results,
  JPK developed analytical predictions, all authors participated in
  writing the manuscript.}

\begin{CCSXML}
<ccs2012>
<concept>
<concept_id>10002978.10002986</concept_id>
<concept_desc>Security and privacy~Formal methods and theory of security</concept_desc>
<concept_significance>500</concept_significance>
</concept>
<concept>
<concept_id>10010147.10010257</concept_id>
<concept_desc>Computing methodologies~Machine learning</concept_desc>
<concept_significance>300</concept_significance>
</concept>
<concept>
<concept_id>10010147.10010178</concept_id>
<concept_desc>Computing methodologies~Artificial intelligence</concept_desc>
<concept_significance>300</concept_significance>
</concept>
</ccs2012>
\end{CCSXML}

\ccsdesc[500]{Security and privacy~Formal methods and theory of security}
\ccsdesc[300]{Computing methodologies~Machine learning}
\ccsdesc[300]{Computing methodologies~Artificial intelligence}

\keywords{Backdoor Learning, Adversarial Machine Learning}

\begin{teaserfigure}
    \centering
    \includegraphics[width=0.9\linewidth]{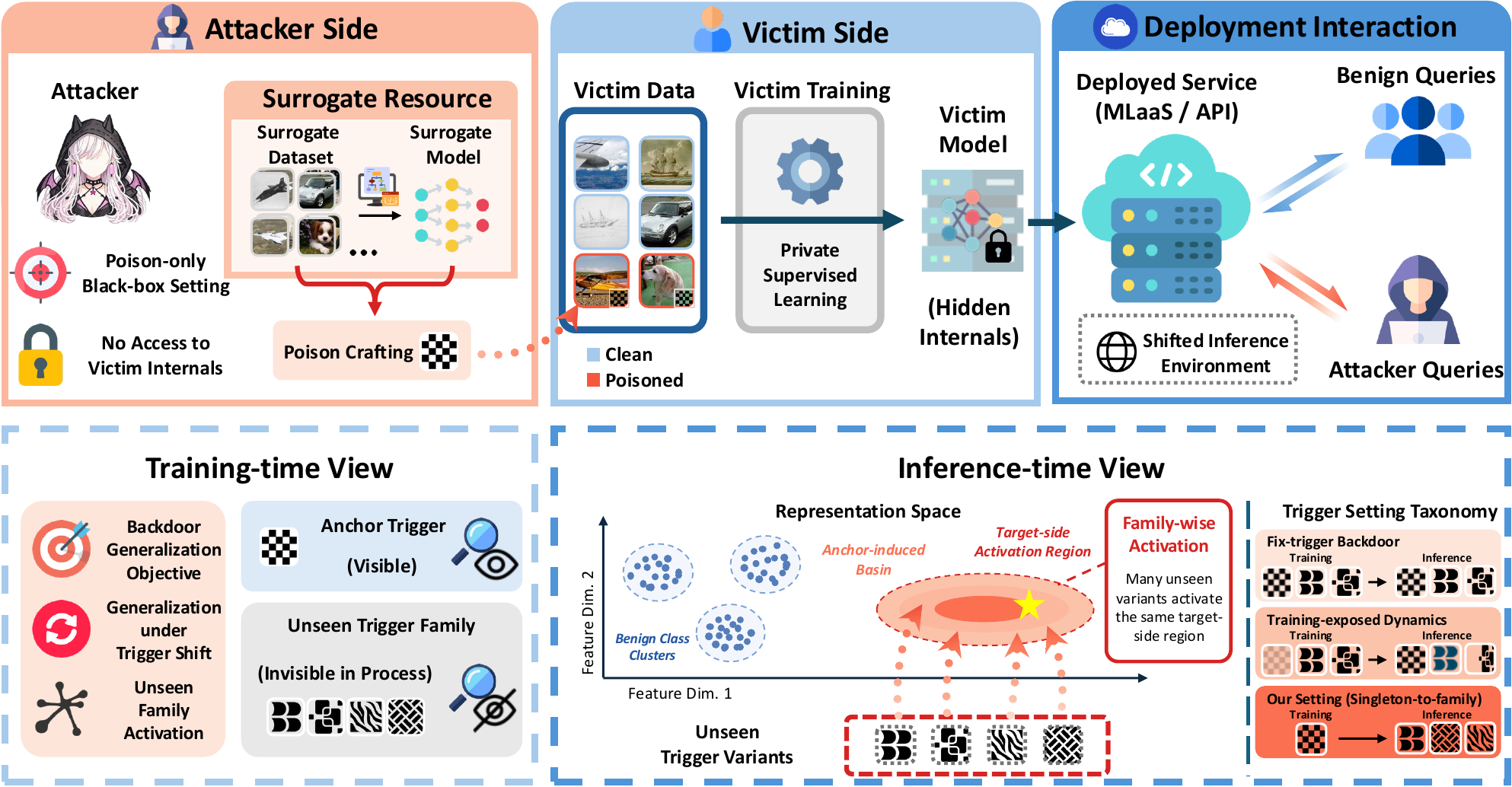}
    \caption{General introduction of \textbf{Lilith}.}
    \label{fig:intro}
\end{teaserfigure}

\begin{abstract}
Machine-learning services increasingly rely on public data, third-party
providers, and outsourced training, creating opportunities for
data-poisoning attacks that implant persistent malicious behavior while
preserving benign utility.
However, existing backdoor studies largely evaluate exact trigger reuse,
training-exposed trigger diversity, or variations along predefined
transformation axes. They therefore leave a critical blind spot:
whether a backdoor learned from one training-time trigger can generalize
to an inference-time trigger family absent from victim training.
We formulate this problem as \emph{backdoor generalization under
training--inference trigger shift} and introduce \textbf{Lilith}, a
black-box anchor-to-family framework. Using only disjoint surrogate
resources, \textbf{Lilith} first induces a compact target-side vulnerability with
a single training anchor, then constructs a bounded inference-only
family that preserves the anchor-induced representation geometry. We
characterize this mechanism through anchor clearance and family reach,
deriving sufficient conditions for family-wise target preservation
under local regularity and bounded surrogate--victim discrepancy.
Experiments across datasets, architectures, poisoning rates, and
defenses show that \textbf{Lilith} achieves high family-wise attack success with
limited utility degradation and a small trigger generalization gap.
Additional analyses show that family activation depends on
representation alignment rather than the proposal mechanism, exposing
a broader threat overlooked by exact-trigger evaluation.
\end{abstract}

\maketitle

\section{Introduction}
Recent advances in machine learning, supported by larger models,
broader training corpora, and increasingly complex learning pipelines,
have substantially improved recognition, detection, generation, and
automated decision-making capabilities
~\cite{Yang2017DepressionRecognition,Nassif2021Anomaly,
Summerville2018PCGML}. These advances have also increased the reliance
of machine-learning systems on public datasets, third-party data
providers, and outsourced training services. By modifying only a small
fraction of the training data, an adversary may implant a backdoor that
preserves normal utility on benign inputs but induces an
attacker-chosen prediction when a designated trigger is presented
~\cite{gu2017badnets}. Such hidden behavior threatens the integrity and
long-term reliability of machine-learning services deployed in
safety-critical and commercial environments
~\cite{Chen2024FaceVulnerabilities,Ge2025LLMBackdoor}.

A successful attack against a practical
machine-learning-as-a-service platform must survive more than victim
training. Production pipelines may inspect, deduplicate, augment,
normalize, or compress data. Deployed services may monitor data quality,
input and prediction drift, model performance, and feature attributions
~\cite{MicrosoftAzureModelMonitoring,GoogleVertexModelMonitoring,
AmazonSageMakerModelMonitor}. Although these mechanisms are not
dedicated backdoor defenses, they increase operational pressure on
attacks that repeatedly expose one recognizable artifact.
Backdoor-specific defenses inspect poisoned samples, runtime
predictions, reconstructed triggers, and compromised representations
~\cite{Tran2018SpectralSignatures,gao2019strip,
wang2019neural,li2021antibackdoor,Min2023FST}.
Consequently, a conventional attack that only memorizes one
poisoning-time pattern may be disrupted by serving-time transformations,
become conspicuous under repeated activation, or lose effectiveness
once the known trigger is identified and filtered, revealing a series of intrinsic limitations that challenge the sustainability of existing backdoor methodologies:


\noindent$\bullet$\quad \textbf{L1. Trigger Reuse.}
Conventional attacks generally evaluate the same trigger mechanism
during poisoning and inference
~\cite{gu2017badnets,zeng2023narcissus}. Their reported success
therefore primarily measures \textbf{exact-trigger activation},
rather than whether the malicious behavior generalizes when the
inference trigger differs from the poisoning artifact.

\noindent$\bullet$\quad \textbf{L2. Training-Exposed Diversity.}
Dynamic and sample-specific attacks generate diverse patterns, but
this diversity is already represented in the victim's poisoned
training data or produced by a mechanism shared across training and
inference~\cite{Nguyen2020InputAware,Li2021ISSBA}.
TITIM directly studies training--inference trigger mismatch and
shows that changes in trigger size or opacity influence attack
generalization and detection~\cite{lin2025titim}. However, its
variation remains centered on predefined intensity dimensions of
the same trigger identity, and its robust configurations may expose
multiple intensities during training.

\noindent$\bullet$\quad \textbf{L3. Missing Preservation Criterion.}
Input-space similarity does not guarantee that two triggers induce
the same latent behavior, particularly when surrogate and victim
environments differ. Constructing an inference-only trigger family
therefore requires a principled criterion that connects each unseen
variant to the vulnerability implanted by the training trigger.

These limitations motivate us to investigate
\textbf{training--inference trigger shift}. Rather than asking whether
a backdoor tolerates small modifications to a known trigger, we study
whether malicious behavior learned from one training-time trigger can
generalize to inference-time triggers never exposed during victim
training. This shifts the focus from
\textbf{trigger-instance robustness} to
\textbf{malicious behavior generalization}.
From a security perspective, such generalization can expand the
post-deployment attack surface from one identifiable trigger instance
to multiple behaviorally aligned activation routes, revealing a
\textbf{higher-freedom threat} that exact-trigger evaluation may
overlook in practical MLaaS systems.
The theoretical novelty is to model trigger variation as a shift in
attack support, rather than exact trigger reuse, training-exposed
diversity, or variation along predefined transformation axes. The
central question is whether unseen trigger diversity can inherit a
vulnerability induced by a single training anchor.

We formulate this problem through representation geometry. A training
anchor should establish a compact target-oriented region with sufficient
decision-boundary clearance, while valid inference variants should
remain within its effective representation reach. Backdoor
generalization is therefore governed by
\textbf{anchor clearance} and \textbf{family reach}. Under a
poison-only black-box setting, we ask whether an attacker can use
disjoint surrogate resources to poison the victim with one trigger,
while constructing an inference-only family that preserves
family-wise activation, benign utility, and perceptual stealth after an
unknown victim training process. This problem introduces two coupled
challenges.

\noindent\textbf{Challenge 1:}
\textit{How can one training-time trigger induce a stable target-oriented
vulnerability when the victim architecture, data, and training process
are unknown?}

\noindent\textbf{Solution 1:}
We introduce \textbf{Anchor Vulnerability Induction}. The module
encourages anchor-triggered representations to form a compact region on
the target side of the decision boundary, rather than optimizing
exact-trigger success alone. Perceptual and frequency constraints
control input-space visibility, while representation concentration and
target separation enlarge vulnerability clearance.

\noindent\textbf{Challenge 2:}
\textit{How can diverse inference-time triggers remain connected to this
vulnerability without participating in victim training?}

\noindent\textbf{Solution 2:}
We introduce \textbf{Representation-Aligned Family Construction}.
Candidate variants are proposed within a bounded subspace and retained
according to their geometric relation to the anchor-induced region.
The proposal source is not essential. Successful variants must preserve
the anchor geometry while maintaining input-space diversity. Our
analysis gives sufficient conditions for target-decision preservation
under local regularity and bounded surrogate--victim discrepancy.

These solutions form \textbf{Lilith}, an
\textbf{anchor-to-family backdoor framework}. Our contributions are
summarized as follows:

\noindent$\bullet$\quad
We introduce \textbf{Lilith}, a black-box anchor-to-family
framework that combines Anchor Vulnerability Induction with
Representation-Aligned Family Construction. Using only disjoint
surrogate resources, it creates a compact target-side anchor basin and
a bounded trigger family that preserves its geometry.

\noindent$\bullet$\quad
We formulate backdoor generalization under
training--inference trigger shift as a support-shift problem. By
characterizing anchor clearance and family reach, we derive sufficient
conditions for unseen family members to preserve the target decision
under bounded surrogate--victim discrepancy.
 
\noindent$\bullet$\quad
We evaluate \textbf{Lilith} across datasets, architectures, poisoning
rates, proposal mechanisms, deployment perturbations, and defenses.
\textbf{Lilith} maintains high family-wise activation with a small
generalization gap, while the results further confirm the importance
of representation alignment and representation-centric mitigation.

\section{Preliminary \& Related Works}

\noindent\textbf{Backdoor Attack.}
Backdoor attacks implant malicious behavior by introducing a small
number of triggered samples into training data
~\cite{Li2024BackdoorLearningSurvey,gu2017badnets}. Early attacks rely
on fixed and reusable patterns
~\cite{gu2017badnets,chen2017targeted}, while later work improves
stealth and practicality through clean-label poisoning
~\cite{turner2019label,zeng2023narcissus,souri2022sleeper}, natural
reflections~\cite{liu2020reflection}, spatial warping
~\cite{nguyen2021wanet}, and latent feature manipulation
~\cite{Yao2019LatentBackdoor,qi2023revisiting}. Frequency-domain
methods such as FTrojan~\cite{Wang2022FrequencyBackdoor} and
WaveAttack~\cite{Xia2024WaveAttack} further construct low-visibility
triggers through structured spectral components. These methods improve
imperceptibility, transferability, or attacker capability, but generally
evaluate activation using the injected trigger or its original
generation rule.

A separate line of work increases trigger diversity. IAD
~\cite{Nguyen2020InputAware} and ISSBA~\cite{Li2021ISSBA} generate
input-aware or sample-specific triggers, although diverse trigger
instances are already exposed during victim training. Trigger-mismatch
research further studies whether a known trigger remains effective
under changes such as size or opacity~\cite{lin2025titim}. These studies
show that backdoor activation need not depend on exact visual
reproduction of a trigger. However, the conditions under which malicious
behavior generalizes across a broader shift in trigger support remain
insufficiently understood. In particular, it is unclear which
representation-level properties enable unseen triggers to preserve the
same malicious decision, and how such generalization should be modeled
and evaluated when inference-time diversity is not fully represented
during training.

\noindent\textbf{Backdoor Defense.}
Production MLaaS platforms increasingly provide monitoring for data quality, input and prediction drift, model performance, and feature attributions~\cite{MicrosoftAzureModelMonitoring,GoogleVertexModelMonitoring,AmazonSageMakerModelMonitor}, improving operational observability. Backdoor-specific detectors instead analyze poisoned-feature statistics~\cite{Tran2018SpectralSignatures,ma2023beatrix}, prediction consistency~\cite{gao2019strip,guo2023scaleup,pal2024backdoor}, or parameter and activation behavior~\cite{hou2024ibdpsc,liu2023detecting}. Inference-time purification methods, including ZIP~\cite{Shi2023ZIP} and SampDetox~\cite{Yang2024SampDetox}, remove trigger information without modifying the deployed model. Model-level mitigation includes trigger inversion~\cite{wang2019neural}, attention distillation~\cite{li2021neural}, adversarial unlearning~\cite{zeng2022adversarial}, fine-tuning-based repair~\cite{zhu2023enhancing}, and representation-oriented methods such as FST~\cite{Min2023FST}, which directly reshape compromised feature geometry. Proactive defenses such as ABL~\cite{li2021antibackdoor} and later robust-training methods~\cite{qi2023proactive,pu2024mellivora} instead intervene during victim training. These representation-level defenses are particularly relevant to trigger families sharing latent behavior rather than a surface pattern.

\section{Problem Definition}
\label{sec:problem}

\begin{figure*}[t]
    \centering
    \includegraphics[width=1.0\linewidth]{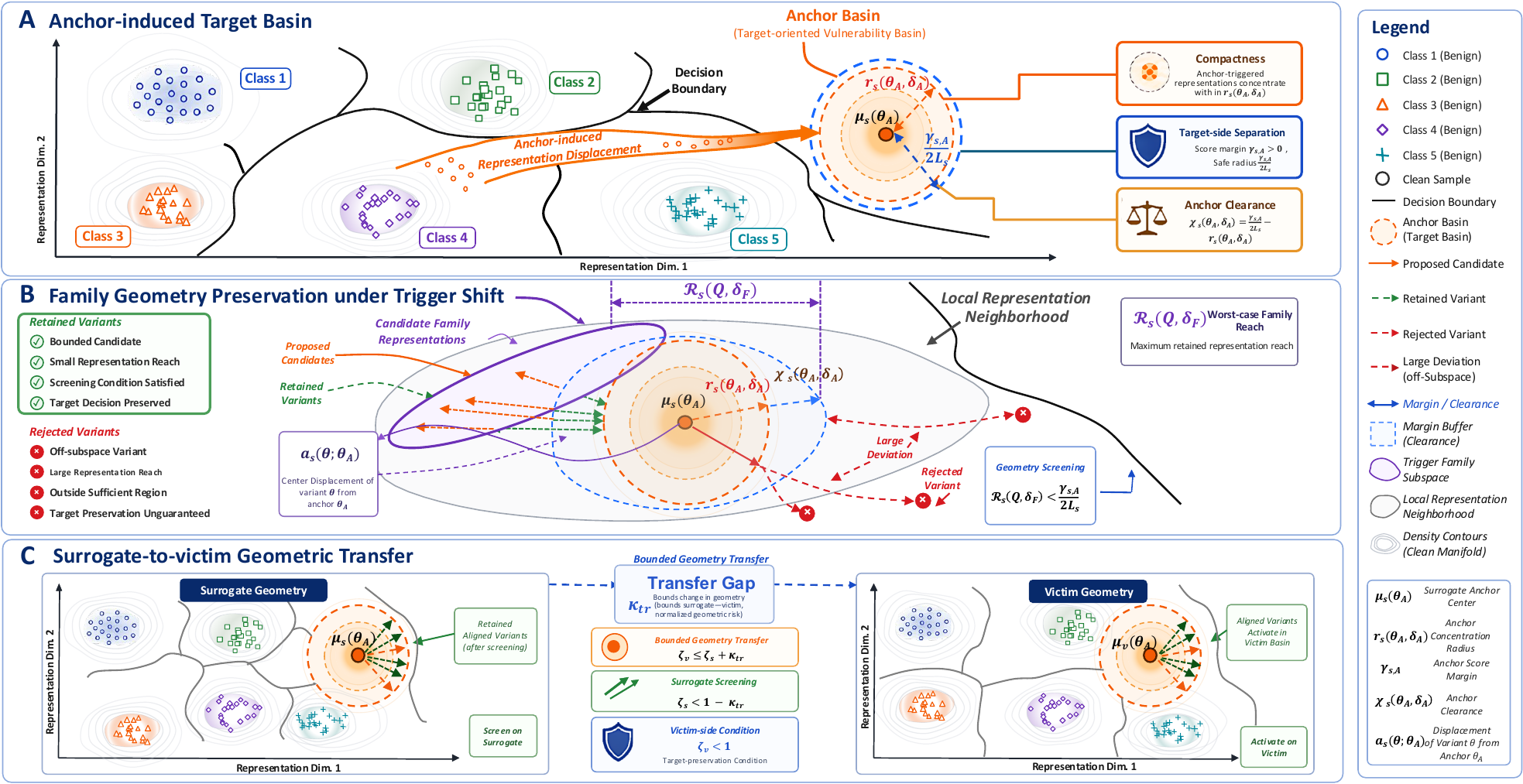}
    \caption{Geometric view of \textbf{Lilith}. The anchor forms a compact target-side basin, and retained variants remain within its margin after bounded surrogate--victim transfer.}
    \Description{A three-panel representation-space diagram showing anchor-basin formation, aligned and misaligned trigger variants, and surrogate-to-victim geometric transfer.}
    \label{fig:lilith-geometry}
\end{figure*}

\noindent\textbf{Motivation.}
Conventional backdoor evaluation typically couples victim poisoning
with exact or training-exposed trigger reuse at inference. Under this
protocol, a high attack success rate may reflect memorization of the
poisoning artifact, but does not reveal whether the implanted malicious
behavior persists when the inference trigger support changes. To
separate artifact-specific activation from behavioral generalization,
we study a support-shift setting in which one training-time trigger
induces a vulnerability that may be activated by previously unseen
inference-time variants. This formulation will enable us to characterize
the problem through the geometry established by the training anchor and
the representation deviation introduced by the inference family.

Before formalizing this trigger shift, we first specify the
\textbf{poison-only black-box setting} that constrains the attacker's
knowledge and control. We then define the training and inference trigger
supports, introduce the representation quantities required for the
analysis, and state the resulting learning objective.

\noindent\textbf{Learning Setting and Threat Model.}
Let $\cX\subseteq[0,1]^p$ be the input space and
$[C]=\{1,\ldots,C\}$ the label set. For
$m\in\{s,v\}$, the surrogate or victim classifier is written as
\[
    f_m=h_m\circ\phi_m,
\]
where $\phi_m\colon\cX\rightarrow\R^{d_m}$ is the representation map
and $h_m$ outputs class scores
$s_{m,1},\ldots,s_{m,C}$.

The victim trains on $\cD_{\mathrm{tr}}$ using an unknown algorithm
$\mathcal{A}_v$. The attacker may modify at most a fraction $\varrho$
of the training examples, but cannot control the victim architecture,
objective, optimizer, or schedule. The attacker possesses only a
disjoint surrogate dataset and model. This corresponds to a
poison-only, limited-information setting
~\cite{Li2021ISSBA,zeng2023narcissus}, and excludes training-controllable
attacks that jointly optimize the classifier and trigger generator
~\cite{Nguyen2020InputAware}.

A trigger parameter $\theta\in\Theta$ defines a pattern
$g(\theta)=v_\theta$ and a triggered input
\[
    T_\theta(x)=B(x,v_\theta).
\]
Given a target class $t$ and an anchor $\theta_A$, the attacker replaces
examples indexed by $\mathcal I$, where
$|\mathcal I|\leq\lfloor\varrho|\cD_{\mathrm{tr}}|\rfloor$, with
\[
    \cD_{\mathrm{poi}}(\theta_A)
    =
    \left\{
        \bigl(T_{\theta_A}(x_i),t\bigr)
        \,\middle|\,
        i\in\mathcal I
    \right\}.
\]
The victim returns
\[
    f_v^\dagger
    =
    \mathcal A_v
    \left(
        \{(x_i,y_i):i\notin\mathcal I\}
        \cup
        \cD_{\mathrm{poi}}(\theta_A)
    \right).
\]
Only the anchor appears during victim training. The subsequent geometry
also applies to clean-label realizations satisfying the anchor condition. With the victim-side training support fixed, we next formalize the trigger distribution changes at deployment.

\noindent\textbf{Training--Inference Trigger Shift.}
Victim training uses the singleton trigger distribution
\[
    Q_{\mathrm{tr}}=\delta_{\theta_A}.
\]
At inference, the attacker samples from a bounded distribution
$Q_{\mathrm{te}}$ whose support excludes $\theta_A$. Thus, all deployed
family members are absent from victim training. This differs from exact
trigger reuse and from attacks whose trigger diversity is already
training-exposed~\cite{Nguyen2020InputAware,Li2021ISSBA}.

To quantify both exact-anchor behavior and its preservation under this
support shift, let $\cD_{\mathrm{te}}^{\neg t}$ denote clean non-target
test inputs. We measure anchor and family activation as
\begin{align*}
    \asr_A
    &=
    \Prb_{x\sim\cD_{\mathrm{te}}^{\neg t}}
    \bigl[f_v^\dagger(T_{\theta_A}(x))=t\bigr],
    \\
    \asr_F(Q_{\mathrm{te}})
    &=
    \E_{\theta\sim Q_{\mathrm{te}}}
    \Prb_{x\sim\cD_{\mathrm{te}}^{\neg t}}
    \bigl[f_v^\dagger(T_\theta(x))=t\bigr].
\end{align*}
To measure the degradation introduced by the shift, their one-sided
difference defines the $\tgg$ (Trigger Generalization Gap),
\[
    \tgg(Q_{\mathrm{te}})
    =
    \bigl[\asr_A-\asr_F(Q_{\mathrm{te}})\bigr]_+.
\]
A small gap indicates that the anchor-induced behavior remains stable
under trigger shift. To explain when stability is possible, we
next characterize the anchor and inference family geometrically.

\noindent\textbf{Representation Geometry.}
For model $m$, define
\[
    Z_m(x,\theta)=\phi_m(T_\theta(x)),
    \qquad
    \mu_m(\theta)=\E_x[Z_m(x,\theta)].
\]
For failure probability $\delta\in(0,1)$, the concentration radius is
\begin{equation}
\label{eq:radius}
    r_m(\theta,\delta)
    =
    \inf\left\{
        r\geq0
        \,\middle|\,
        \Prb_x
        \left[
            \|Z_m(x,\theta)-\mu_m(\theta)\|_2\leq r
        \right]
        \geq1-\delta
    \right\}.
\end{equation}
We further define
\begin{align*}
    a_m(\theta;\theta_A)
    &=
    \|\mu_m(\theta)-\mu_m(\theta_A)\|_2,
    \\
    \rho_m(\theta,\delta)
    &=
    a_m(\theta;\theta_A)+r_m(\theta,\delta),
    \\
    \gamma_{m,A}
    &=
    s_{m,t}(\mu_m(\theta_A))
    -
    \max_{j\neq t}s_{m,j}(\mu_m(\theta_A)).
\end{align*}
Here, $\gamma_{m,A}$ measures anchor-side separation, while
$\rho_m$ measures how far a variant distribution reaches from the
anchor center. Family activation is possible when the latter remains
within the former.

\noindent\textbf{Analytical Conditions.}
To analyze the preceding geometry and derive sufficient conditions for
family-wise activation, we introduce the following local assumptions.

\begin{assumption}[\textbf{Local Score Regularity}]
\label{ass:score-lipschitz}
For each $m\in\{s,v\}$, every score $s_{m,j}$ is $L_m$-Lipschitz on a
neighborhood containing the representations used in the corresponding
statement.
\end{assumption}

This local condition is standard in margin-based robustness analyses
~\cite{hein2017formal,tsuzuku2018lipschitz,sokolic2017robust} and does
not require a global Lipschitz constant.

\begin{assumption}[\textbf{Learned Anchor Basin}]
\label{ass:anchor-basin}
For some $\delta_A\in(0,1)$, the victim anchor has finite radius
$r_v(\theta_A,\delta_A)$ and positive target margin
$\gamma_{v,A}>0$.
\end{assumption}

Compactness and target separation are desired learned properties, not
universal consequences of poisoning
~\cite{huang2022dbd,qi2023revisiting,Yao2019LatentBackdoor}.

\begin{assumption}[\textbf{Regular Trigger Manifold}]
\label{ass:trigger-manifold}
The pattern map $g$ is continuously differentiable on the admissible
set and satisfies
$\|J_g(\theta)\|_{2\rightarrow2}\leq M_g$.
For fixed $x$, $B(x,\cdot)$ is $L_B$-Lipschitz, and $\phi_m$ is locally
$L_{\phi,m}$-Lipschitz on the corresponding triggered inputs.
\end{assumption}

Bounded parameterizations are common for patch, warp, blending, and
frequency-domain triggers
~\cite{guo2019lowfrequency,Wang2022FrequencyBackdoor}.

Whenever $\gamma_{m,A}>0$, define the normalized geometric risk
\begin{equation}
\label{eq:risk-ratio}
    \zeta_m(\theta,\delta)
    =
    \frac{2L_m\rho_m(\theta,\delta)}{\gamma_{m,A}}.
\end{equation}

\begin{assumption}[\textbf{Bounded Geometry Transfer}]
\label{ass:geometry-transfer}
For each retained family member,
\[
    \zeta_v(\theta,\delta)
    \leq
    \zeta_s(\theta,\delta)+\kappa_{\mathrm{tr}},
\]
where $\kappa_{\mathrm{tr}}\geq0$ bounds surrogate--victim geometric
discrepancy.
\end{assumption}

This is a bounded-transfer abstraction rather than a universal
guarantee. Its plausibility is motivated by cross-model attack transfer
and shared vulnerable directions
~\cite{demontis2019transfer,moosavi2017universal}.

With the trigger shift, representation geometry, and analytical
conditions established, we formalize the resulting attack property.

\begin{definition}[\textbf{Backdoor Generalization under Trigger Shift}]
\label{def:trigger-shift}
An attack exhibits backdoor generalization under trigger shift when
victim training uses only the anchor, deployment uses a non-anchor
inference family, and the resulting model achieves high family ASR with
a small $\tgg$ under the prescribed utility and
stealth budgets.
\end{definition}

\noindent\textbf{Trigger-Shift Learning Objective.}
Definition~\ref{def:trigger-shift} specifies the desired attack
property. We now translate it into the learning objective that a
concrete solution must address. The attacker seeks an anchor and
inference distribution that maximize family activation while satisfying
poisoning, utility, and stealth constraints:
\begin{equation}
\label{eq:attack-objective}
\begin{aligned}
    \max_{\theta_A,Q_{\mathrm{te}}}
    \quad &
    \asr_F(Q_{\mathrm{te}})
    \\
    \text{s.t.}\quad
    &
    Q_{\mathrm{tr}}=\delta_{\theta_A},
    \quad
    \operatorname{supp}(Q_{\mathrm{te}})
    \subseteq\Theta\setminus\{\theta_A\},
    \\
    &
    \varrho\leq\varrho_{\max},
    \quad
    \Delta\caacc\leq\varepsilon_{\mathrm{util}},
    \\
    &
    c_{\mathrm{st}}(\theta_A)\leq\varepsilon_A,
    \quad
    Q_{\mathrm{te}}
    \bigl(c_{\mathrm{st}}(\theta)\leq\varepsilon_F\bigr)=1.
\end{aligned}
\end{equation}
Because the victim model is unknown, the objective cannot be optimized
directly. We therefore try to design a surrogate-guided construction tailored
to its two coupled requirements.

\section{Method}
\label{sec:method}

\subsection{Method Overview}
\label{sec:method-overview}

Equation~\eqref{eq:attack-objective} couples two requirements: creating
a training anchor with sufficient target-side clearance and constructing
an inference family whose representation reach remains within that
clearance. 
Accordingly,
we address
these requirements using disjoint surrogate resources and two sequential
modules. \emph{Module~I, Anchor Vulnerability Induction}, constructs the
sole trigger used in victim training. \emph{Module~II,
Representation-Aligned Family Construction}, fixes the anchor and
constructs a bounded inference-only family that preserves its surrogate
geometry. Both modules operate on the surrogate side; only the anchor
enters victim training, while the family is used after deployment.


Figure~\ref{fig:lilith-geometry} further summarizes the geometric
principle underlying the two modules. Module~I enlarges anchor clearance,
whereas Module~II controls family reach under bounded
surrogate--victim transfer. We next detail their construction and theoretical analysis.

\subsection{Anchor Vulnerability Induction}
\label{sec:anchor}

Module~I searches for an anchor whose surrogate representations are
compact, target-oriented, and budget-compliant. For a candidate
$\theta$, define its target margin and clearance as
\begin{align*}
    \gamma_m(\theta)
    &=
    s_{m,t}(\mu_m(\theta))
    -
    \max_{j\neq t}s_{m,j}(\mu_m(\theta)),
    \\
    \chi_s(\theta,\delta)
    &=
    \frac{\gamma_s(\theta)}{2L_s}
    -
    r_s(\theta,\delta).
\end{align*}
Positive clearance means that the high-probability triggered region lies
inside a target-side margin ball. The implementation uses $L_s$ only
for geometric interpretation and does not estimate a global Lipschitz
constant.

The anchor is selected by
\begin{equation}
\label{eq:anchor-program}
    \theta_A
    \in
    \argmax_{\theta\in\Theta_{\mathrm{adm}}}
    \chi_s(\theta,\delta_A)
    \quad
    \text{s.t.}
    \quad
    c_{\mathrm{st}}(\theta)\leq\varepsilon_A.
\end{equation}
Differentiable estimators increase target consistency, contract
within-trigger representations, and enforce perceptual, amplitude, and
frequency budgets. Together they approximate the single geometric
objective in Eq.~\eqref{eq:anchor-program}.

\begin{proposition}[Anchor basin activation]
\label{prop:anchor-activation}
Suppose Assumptions~\ref{ass:score-lipschitz}
and~\ref{ass:anchor-basin} hold. If
\begin{equation}
\label{eq:anchor-condition}
    \gamma_{v,A}
    >
    2L_v r_v(\theta_A,\delta_A),
\end{equation}
then
\[
    \asr_A\geq1-\delta_A.
\]
\end{proposition}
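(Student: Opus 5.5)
The plan is a standard margin-ball argument carried out in the victim representation space. Let $\mu_A=\mu_v(\theta_A)$ and $r_A=r_v(\theta_A,\delta_A)$, which is finite by Assumption~\ref{ass:anchor-basin}. The proof has three steps: a concentration step, a score-perturbation step, and a transfer from representations back to predictions.

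\textbf{Concentration step.} By Eq.~\eqref{eq:radius}, the event
\[
E=\bigl\{x:\|Z_v(x,\theta_A)-\mu_A\|_2\le r_A\bigr\}
\]
has probability at least $1-\delta_A$. To see this, note that the map $r\mapsto\Prb_x[\|Z_v-\mu_A\|_2\le r]$ is a CDF, hence right-continuous. Therefore the infimum in Eq.~\eqref{eq:radius} is attained, and the bound holds at $r_A$ itself rather than only at $r_A+\epsilon$. The probability here should be taken over $x\sim\cD_{\mathrm{te}}^{\neg t}$, so that it matches the definition of $\asr_A$. This is the one place where I would state explicitly that $\E_x$ and $\Prb_x$ refer to that clean non-target test distribution.

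\textbf{Score-perturbation step.} Fix $x\in E$ and write $z=Z_v(x,\theta_A)$. Assumption~\ref{ass:score-lipschitz} applies on a neighborhood containing the ball $\{z:\|z-\mu_A\|_2\le r_A\}$; taking that ball as the relevant set of representations is the main point to make explicit. For every $j\neq t$ it then gives
\[
s_{v,t}(z)-s_{v,j}(z)\ge s_{v,t}(\mu_A)-s_{v,j}(\mu_A)-2L_v\|z-\mu_A\|_2 .
\]
Taking the minimum over $j\neq t$ yields
\[
s_{v,t}(z)-\max_{j\neq t}s_{v,j}(z)\ge\gamma_{v,A}-2L_v r_A ,
\]
and the right-hand side is strictly positive by Eq.~\eqref{eq:anchor-condition}.

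\textbf{Transfer to predictions.} Since the target score strictly exceeds every other score, the prediction is unique with no ties: $f_v^\dagger(T_{\theta_A}(x))=\arg\max_j s_{v,j}(z)=t$ for every $x\in E$. Hence $\asr_A\ge\Prb[E]\ge1-\delta_A$.

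The expected obstacle is bookkeeping rather than substance. Two points need care: attainment of the infimum in the radius definition, and the requirement that the Lipschitz neighborhood cover the whole $r_A$-ball around $\mu_A$ (including $\mu_A$ itself, which need not be an attained representation). I would address the second point by reading "representations used in the statement" in Assumption~\ref{ass:score-lipschitz} as exactly this ball.
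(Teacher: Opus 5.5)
Your proof is correct and follows the paper's argument. Both take the concentration event of probability at least $1-\delta_A$, derive a $2L_v$-Lipschitz lower bound $\gamma_{v,A}-2L_v r_v(\theta_A,\delta_A)>0$ on the target margin there, and conclude that the prediction is $t$ on that event. The differences are minor: you bound each pairwise gap $s_{v,t}-s_{v,j}$ and then minimize over $j\neq t$, where the paper invokes its auxiliary lemma that $M_{v,t}$ is $2L_v$-Lipschitz, and you justify via right-continuity that the infimum in Eq.~\eqref{eq:radius} is attained, a step the paper uses without comment.
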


The proposition shows why Module~I must jointly optimize concentration
and target separation. Either property alone is insufficient to keep
the triggered distribution inside the target region.

\subsection{Representation-Aligned Family Construction}
\label{sec:family}

Module~II fixes $\theta_A$ and constructs an inference distribution over
bounded non-anchor variants. Let
$U\in\R^{q\times k}$ have orthonormal columns spanning an admissible
subspace. Candidate parameters satisfy
\[
    \Theta_A(\eta)
    =
    \left\{
        \theta_A+U\alpha
        \,\middle|\,
        \|\alpha\|_2\leq\eta
    \right\}.
\]
A lightweight generator, random sampler, or quasi-random sampler may
propose $\alpha$. None participates in victim training.

For a distribution $Q$ over $\Theta_A(\eta)$, define its worst-case
surrogate reach as
\[
    \mathcal R_s(Q,\delta)
    =
    \esssup_{\theta\sim Q}
    \rho_s(\theta,\delta).
\]
Module~II maximizes family diversity while retaining only variants that
fit inside the surrogate anchor margin:
\begin{equation}
\label{eq:family-program}
\begin{aligned}
    \max_{Q\in\cQ(\Theta_A(\eta))}
    \quad &
    \mathcal V(Q)
    \\
    \text{s.t.}\quad
    &
    \mathcal R_s(Q,\delta_F)
    <
    \frac{\gamma_{s,A}}{2L_s},
    \\
    &
    Q\bigl(c_{\mathrm{st}}(\theta)\leq\varepsilon_F\bigr)=1.
\end{aligned}
\end{equation}
The implementation combines feature alignment, projection onto the
admissible subspace, and diversity control. Alignment preserves the
anchor-induced representation, projection enforces boundedness, and
diversity prevents collapse.

\begin{table*}[t]
\centering
\caption{\textbf{Attack performance ($\Delta$CA/ASR$_A$/ASR$_F$ $\pm$ Std) under varying $\varrho$}.}

\label{tab:trigger_agnostic_performance}
\begin{adjustbox}{width=0.90\textwidth}
\renewcommand{\arraystretch}{1.2}
\begin{tabular}{cccccccccccccccccc}
\toprule
\multirow{2}{*}{\textbf{Dataset}} & \multirow{2}{*}{\textbf{$\varrho$}} 
& \multicolumn{3}{c}{\textbf{ResNet-18}} 
& \multicolumn{3}{c}{\textbf{ResNet-34}} 
& \multicolumn{3}{c}{\textbf{VGG13-BN}} 
& \multicolumn{3}{c}{\textbf{ViT}} 
& \multicolumn{3}{c}{\textbf{SimpleViT}} \\
\cmidrule(lr){3-5} \cmidrule(lr){6-8} \cmidrule(lr){9-11} \cmidrule(lr){12-14} \cmidrule(lr){15-17}
 &  & \cellcolor{blue!5}$\Delta\caacc$ & \cellcolor{pink!15}$\asr_A$ & \cellcolor{pink!45}$\asr_F$ & \cellcolor{blue!5}$\Delta\caacc$ & \cellcolor{pink!15}$\asr_A$ & \cellcolor{pink!45}$\asr_F$ & \cellcolor{blue!5}$\Delta\caacc$ & \cellcolor{pink!15}$\asr_A$ & \cellcolor{pink!45}$\asr_F$ & \cellcolor{blue!5}$\Delta\caacc$ & \cellcolor{pink!15}$\asr_A$ & \cellcolor{pink!45}$\asr_F$ & \cellcolor{blue!5}$\Delta\caacc$ & \cellcolor{pink!15}$\asr_A$ & \cellcolor{pink!45}$\asr_F$ \\
\midrule
\multirow{3}{*}{\begin{tabular}[c]{@{}c@{}}CIFAR-10\end{tabular}}
 & 0.5\% & \cellcolor{blue!5}1.4\% & \cellcolor{pink!15}97.5\% & \cellcolor{pink!45}92.8\%$\pm$3.2\% & \cellcolor{blue!5}1.6\% & \cellcolor{pink!15}97.4\% & \cellcolor{pink!45}96.1\%$\pm$1.3\% & \cellcolor{blue!5}1.1\% & \cellcolor{pink!15}97.1\% & \cellcolor{pink!45}92.4\%$\pm$3.4\% & \cellcolor{blue!5}1.3\% & \cellcolor{pink!15}96.1\% & \cellcolor{pink!45}93.4\%$\pm$2.3\% & \cellcolor{blue!5}1.6\% & \cellcolor{pink!15}97.0\% & \cellcolor{pink!45}94.2\%$\pm$1.5\% \\
 & 1\% & \cellcolor{blue!5}1.1\% & \cellcolor{pink!15}99.5\% & \cellcolor{pink!45}97.1\%$\pm$1.7\% & \cellcolor{blue!5}1.6\% & \cellcolor{pink!15}99.8\% & \cellcolor{pink!45}99.8\%$\pm$0.2\% & \cellcolor{blue!5}1.2\% & \cellcolor{pink!15}99.6\% & \cellcolor{pink!45}98.6\%$\pm$0.8\% & \cellcolor{blue!5}2.5\% & \cellcolor{pink!15}97.3\% & \cellcolor{pink!45}95.0\%$\pm$1.7\% & \cellcolor{blue!5}1.6\% & \cellcolor{pink!15}97.5\% & \cellcolor{pink!45}95.2\%$\pm$1.2\% \\
 & 5\% & \cellcolor{blue!5}3.4\% & \cellcolor{pink!15}99.9\% & \cellcolor{pink!45}100.0\%$\pm$0.1\% & \cellcolor{blue!5}3.1\% & \cellcolor{pink!15}100.0\% & \cellcolor{pink!45}100.0\%$\pm$0.1\% & \cellcolor{blue!5}1.6\% & \cellcolor{pink!15}99.9\% & \cellcolor{pink!45}99.9\%$\pm$0.3\% & \cellcolor{blue!5}1.2\% & \cellcolor{pink!15}99.6\% & \cellcolor{pink!45}98.1\%$\pm$1.0\% & \cellcolor{blue!5}1.1\% & \cellcolor{pink!15}99.4\% & \cellcolor{pink!45}98.3\%$\pm$1.2\% \\
\midrule
\multirow{3}{*}{\begin{tabular}[c]{@{}c@{}}CIFAR-100\end{tabular}}
 & 0.5\% & \cellcolor{blue!5}1.2\% & \cellcolor{pink!15}93.6\% & \cellcolor{pink!45}91.7\%$\pm$3.2\% & \cellcolor{blue!5}2.7\% & \cellcolor{pink!15}95.9\% & \cellcolor{pink!45}94.2\%$\pm$1.7\% & \cellcolor{blue!5}2.6\% & \cellcolor{pink!15}97.3\% & \cellcolor{pink!45}92.3\%$\pm$3.0\% & \cellcolor{blue!5}1.3\% & \cellcolor{pink!15}95.6\% & \cellcolor{pink!45}91.6\%$\pm$2.9\% & \cellcolor{blue!5}1.4\% & \cellcolor{pink!15}94.1\% & \cellcolor{pink!45}92.2\%$\pm$2.0\% \\
 & 1\% & \cellcolor{blue!5}1.1\% & \cellcolor{pink!15}97.2\% & \cellcolor{pink!45}94.1\%$\pm$1.9\% & \cellcolor{blue!5}3.1\% & \cellcolor{pink!15}98.2\% & \cellcolor{pink!45}97.7\%$\pm$1.3\% & \cellcolor{blue!5}3.4\% & \cellcolor{pink!15}97.2\% & \cellcolor{pink!45}95.5\%$\pm$2.7\% & \cellcolor{blue!5}1.4\% & \cellcolor{pink!15}96.4\% & \cellcolor{pink!45}93.0\%$\pm$2.8\% & \cellcolor{blue!5}1.9\% & \cellcolor{pink!15}95.4\% & \cellcolor{pink!45}92.9\%$\pm$2.7\% \\
 & 5\% & \cellcolor{blue!5}2.3\% & \cellcolor{pink!15}99.9\% & \cellcolor{pink!45}99.4\%$\pm$0.6\% & \cellcolor{blue!5}4.1\% & \cellcolor{pink!15}99.8\% & \cellcolor{pink!45}99.8\%$\pm$0.2\% & \cellcolor{blue!5}1.2\% & \cellcolor{pink!15}99.9\% & \cellcolor{pink!45}99.6\%$\pm$0.5\% & \cellcolor{blue!5}1.8\% & \cellcolor{pink!15}98.4\% & \cellcolor{pink!45}98.4\%$\pm$1.2\% & \cellcolor{blue!5}1.0\% & \cellcolor{pink!15}98.7\% & \cellcolor{pink!45}98.3\%$\pm$1.2\% \\
\midrule
\multirow{3}{*}{\begin{tabular}[c]{@{}c@{}}TinyImageNet\end{tabular}}
 & 0.5\% & \cellcolor{blue!5}2.8\% & \cellcolor{pink!15}93.9\% & \cellcolor{pink!45}91.5\%$\pm$3.2\% & \cellcolor{blue!5}3.3\% & \cellcolor{pink!15}94.3\% & \cellcolor{pink!45}92.5\%$\pm$2.3\% & \cellcolor{blue!5}3.4\% & \cellcolor{pink!15}95.8\% & \cellcolor{pink!45}91.5\%$\pm$3.6\% & \cellcolor{blue!5}0.8\% & \cellcolor{pink!15}95.8\% & \cellcolor{pink!45}93.6\%$\pm$2.2\% & \cellcolor{blue!5}0.5\% & \cellcolor{pink!15}95.9\% & \cellcolor{pink!45}92.6\%$\pm$2.3\% \\
 & 1\% & \cellcolor{blue!5}2.9\% & \cellcolor{pink!15}99.7\% & \cellcolor{pink!45}98.9\%$\pm$0.6\% & \cellcolor{blue!5}4.3\% & \cellcolor{pink!15}99.8\% & \cellcolor{pink!45}99.4\%$\pm$0.3\% & \cellcolor{blue!5}4.5\% & \cellcolor{pink!15}99.9\% & \cellcolor{pink!45}99.1\%$\pm$0.6\% & \cellcolor{blue!5}0.4\% & \cellcolor{pink!15}95.3\% & \cellcolor{pink!45}94.9\%$\pm$1.5\% & \cellcolor{blue!5}1.0\% & \cellcolor{pink!15}96.8\% & \cellcolor{pink!45}93.3\%$\pm$1.3\% \\
 & 5\% & \cellcolor{blue!5}3.7\% & \cellcolor{pink!15}100.0\% & \cellcolor{pink!45}100.0\%$\pm$0.1\% & \cellcolor{blue!5}5.0\% & \cellcolor{pink!15}100.0\% & \cellcolor{pink!45}100.0\%$\pm$0.0\% & \cellcolor{blue!5}5.3\% & \cellcolor{pink!15}100.0\% & \cellcolor{pink!45}100.0\%$\pm$0.1\% & \cellcolor{blue!5}1.4\% & \cellcolor{pink!15}99.0\% & \cellcolor{pink!45}95.1\%$\pm$1.7\% & \cellcolor{blue!5}1.1\% & \cellcolor{pink!15}97.1\% & \cellcolor{pink!45}97.7\%$\pm$1.4\% \\
\midrule
\multirow{3}{*}{\begin{tabular}[c]{@{}c@{}}SVHN\end{tabular}}
 & 0.5\% & \cellcolor{blue!5}0.8\% & \cellcolor{pink!15}95.6\% & \cellcolor{pink!45}90.2\%$\pm$4.7\% & \cellcolor{blue!5}1.0\% & \cellcolor{pink!15}98.1\% & \cellcolor{pink!45}95.3\%$\pm$1.2\% & \cellcolor{blue!5}0.8\% & \cellcolor{pink!15}96.4\% & \cellcolor{pink!45}91.7\%$\pm$2.9\% & \cellcolor{blue!5}0.9\% & \cellcolor{pink!15}94.2\% & \cellcolor{pink!45}90.0\%$\pm$4.0\% & \cellcolor{blue!5}0.8\% & \cellcolor{pink!15}96.2\% & \cellcolor{pink!45}93.5\%$\pm$1.9\% \\
 & 1\% & \cellcolor{blue!5}1.0\% & \cellcolor{pink!15}98.5\% & \cellcolor{pink!45}98.1\%$\pm$1.1\% & \cellcolor{blue!5}1.5\% & \cellcolor{pink!15}99.9\% & \cellcolor{pink!45}99.5\%$\pm$0.1\% & \cellcolor{blue!5}1.5\% & \cellcolor{pink!15}100.0\% & \cellcolor{pink!45}99.8\%$\pm$0.1\% & \cellcolor{blue!5}0.6\% & \cellcolor{pink!15}98.7\% & \cellcolor{pink!45}97.5\%$\pm$0.9\% & \cellcolor{blue!5}1.0\% & \cellcolor{pink!15}95.8\% & \cellcolor{pink!45}94.0\%$\pm$1.7\% \\
 & 5\% & \cellcolor{blue!5}1.2\% & \cellcolor{pink!15}100.0\% & \cellcolor{pink!45}100.0\%$\pm$0.0\% & \cellcolor{blue!5}2.1\% & \cellcolor{pink!15}100.0\% & \cellcolor{pink!45}100.0\%$\pm$0.0\% & \cellcolor{blue!5}2.3\% & \cellcolor{pink!15}100.0\% & \cellcolor{pink!45}100.0\%$\pm$0.1\% & \cellcolor{blue!5}1.8\% & \cellcolor{pink!15}98.8\% & \cellcolor{pink!45}94.3\%$\pm$2.3\% & \cellcolor{blue!5}1.7\% & \cellcolor{pink!15}99.5\% & \cellcolor{pink!45}98.2\%$\pm$0.6\% \\
\midrule
\multirow{3}{*}{\begin{tabular}[c]{@{}c@{}}ImageNet-1K\end{tabular}}
 & 0.5\% & \cellcolor{blue!5}2.5\% & \cellcolor{pink!15}99.0\% & \cellcolor{pink!45}98.9\%$\pm$1.9\% & \cellcolor{blue!5}2.9\% & \cellcolor{pink!15}98.9\% & \cellcolor{pink!45}97.5\%$\pm$1.0\% & \cellcolor{blue!5}2.7\% & \cellcolor{pink!15}98.9\% & \cellcolor{pink!45}96.3\%$\pm$2.0\% & \cellcolor{blue!5}0.6\% & \cellcolor{pink!15}98.5\% & \cellcolor{pink!45}96.0\%$\pm$1.7\% & \cellcolor{blue!5}0.3\% & \cellcolor{pink!15}97.4\% & \cellcolor{pink!45}95.1\%$\pm$2.0\% \\
 & 1\% & \cellcolor{blue!5}2.5\% & \cellcolor{pink!15}100.0\% & \cellcolor{pink!45}99.8\%$\pm$0.4\% & \cellcolor{blue!5}3.4\% & \cellcolor{pink!15}100.0\% & \cellcolor{pink!45}99.9\%$\pm$0.1\% & \cellcolor{blue!5}2.9\% & \cellcolor{pink!15}99.9\% & \cellcolor{pink!45}99.5\%$\pm$0.2\% & \cellcolor{blue!5}1.0\% & \cellcolor{pink!15}98.9\% & \cellcolor{pink!45}97.0\%$\pm$0.5\% & \cellcolor{blue!5}0.8\% & \cellcolor{pink!15}98.2\% & \cellcolor{pink!45}97.0\%$\pm$0.9\% \\
 & 5\% & \cellcolor{blue!5}3.1\% & \cellcolor{pink!15}100.0\% & \cellcolor{pink!45}100.0\%$\pm$0.0\% & \cellcolor{blue!5}3.5\% & \cellcolor{pink!15}100.0\% & \cellcolor{pink!45}100.0\%$\pm$0.1\% & \cellcolor{blue!5}4.7\% & \cellcolor{pink!15}100.0\% & \cellcolor{pink!45}100.0\%$\pm$0.0\% & \cellcolor{blue!5}1.5\% & \cellcolor{pink!15}99.6\% & \cellcolor{pink!45}98.7\%$\pm$0.6\% & \cellcolor{blue!5}0.8\% & \cellcolor{pink!15}99.6\% & \cellcolor{pink!45}98.0\%$\pm$0.8\% \\
\bottomrule
\end{tabular}
\end{adjustbox}
\vspace{-1em}
\end{table*}

\begin{proposition}[\textbf{Parameter-to-representation Stability}]
\label{prop:parameter-stability}
Suppose Assumption~\ref{ass:trigger-manifold} holds for model $m$, and
let
\[
    C_m=L_{\phi,m}L_BM_g.
\]
For any $\theta\in\Theta_A(\eta)$ and admissible $x$,
\begin{equation}
\label{eq:pointwise-stability}
    \|Z_m(x,\theta)-Z_m(x,\theta_A)\|_2
    \leq C_m\eta.
\end{equation}
Moreover,
\begin{align}
\label{eq:center-bound}
    a_m(\theta;\theta_A)
    &\leq C_m\eta,
    \\
\label{eq:radius-bound}
    r_m(\theta,\delta_A)
    &\leq r_m(\theta_A,\delta_A)+2C_m\eta,
    \\
\label{eq:reach-bound}
    \rho_m(\theta,\delta_A)
    &\leq r_m(\theta_A,\delta_A)+3C_m\eta.
\end{align}
\end{proposition}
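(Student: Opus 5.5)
We prove the pointwise bound \eqref{eq:pointwise-stability} by chaining the three Lipschitz constants of Assumption~\ref{ass:trigger-manifold}; the other three inequalities then follow from it by averaging and the triangle inequality.

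\textbf{Pointwise bound.} Fix $\theta=\theta_A+U\alpha\in\Theta_A(\eta)$. We want to apply the mean value inequality to $g$ along the segment $\theta_A+sU\alpha$, $s\in[0,1]$. This needs the segment to lie in the admissible set where $\|J_g\|_{2\to2}\le M_g$ holds. The set $\Theta_A(\eta)$ is the image of a Euclidean ball, hence convex, so this is the one point we must assume: $\Theta_A(\eta)$ is contained in that admissible set. This is the step I expect to need the most care. Granting it,
\[
\|v_\theta-v_{\theta_A}\|_2\le M_g\|U\alpha\|_2=M_g\|\alpha\|_2\le M_g\eta,
\]
where the equality uses that $U$ has orthonormal columns. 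Lipschitzness of $B(x,\cdot)$ then gives $\|T_\theta(x)-T_{\theta_A}(x)\|_2\le L_BM_g\eta$. Local Lipschitzness of $\phi_m$ on the triggered inputs gives $\|Z_m(x,\theta)-Z_m(x,\theta_A)\|_2\le C_m\eta$. Here "locally" must be read as covering the segment between the two triggered inputs, or at least both endpoints.

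\textbf{Center bound \eqref{eq:center-bound}.} Write $\mu_m(\theta)-\mu_m(\theta_A)=\E_x[Z_m(x,\theta)-Z_m(x,\theta_A)]$. Jensen's inequality (convexity of the norm) and the pointwise bound give $a_m\le C_m\eta$. This assumes the expectations exist, which holds whenever $\mu_m(\theta_A)$ is finite, since the difference is bounded.

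\textbf{Radius bound \eqref{eq:radius-bound}.} Take any $r\ge r_m(\theta_A,\delta_A)$ for which $\Prb_x[\|Z_m(x,\theta_A)-\mu_m(\theta_A)\|_2\le r]\ge 1-\delta_A$. Such $r$ exist arbitrarily close to the infimum, and by right-continuity of the distribution function the infimum itself is attained. On that event, the triangle inequality gives
\[
\|Z_m(x,\theta)-\mu_m(\theta)\|_2\le \|Z_m(x,\theta)-Z_m(x,\theta_A)\|_2+\|Z_m(x,\theta_A)-\mu_m(\theta_A)\|_2+a_m(\theta;\theta_A).
\]
By the pointwise and center bounds, the right-hand side is at most $r+2C_m\eta$. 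So the same event witnesses radius $r+2C_m\eta$ for $\theta$, and taking the infimum yields \eqref{eq:radius-bound}.

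\textbf{Reach bound \eqref{eq:reach-bound}.} Add \eqref{eq:center-bound} and \eqref{eq:radius-bound} according to the definition $\rho_m=a_m+r_m$, with $\delta=\delta_A$.
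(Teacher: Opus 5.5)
Your proof is correct and follows essentially the same route as the paper. You chain the Lipschitz constants of $g$, $B(x,\cdot)$, and $\phi_m$ for the pointwise bound, apply Jensen for the center bound, use the three-term triangle inequality on the anchor concentration event together with minimality of the radius, and then add the bounds to get the reach; your extra care about the segment lying in the admissible set (via convexity of $\Theta_A(\eta)$) and about attainment of the infimum in Eq.~\eqref{eq:radius} makes explicit points the paper's proof uses implicitly.
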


This conservative result explains why a bounded smooth manifold is a
suitable proposal space. Direct feature screening may retain additional
variants whose observed reach remains small.

\begin{theorem}[\textbf{Distributional Family Activation}]
\label{thm:family-activation}
Suppose Assumption~\ref{ass:score-lipschitz} holds and
$\gamma_{v,A}>0$. If member $\theta$ has concentration failure
probability $\delta(\theta)$, then
\begin{equation}
\label{eq:family-lower-bound}
    \asr_F(Q_{\mathrm{te}})
    \geq
    \E_{\theta\sim Q_{\mathrm{te}}}
    \left[
        \bigl(1-\delta(\theta)\bigr)
        \ind\left\{
            \zeta_v(\theta,\delta(\theta))<1
        \right\}
    \right].
\end{equation}
\end{theorem}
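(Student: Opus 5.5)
The plan is to prove a pointwise (per-member) bound and then average over $Q_{\mathrm{te}}$. Fix $\theta$ in the support of $Q_{\mathrm{te}}$ and write $\delta=\delta(\theta)$. If $\zeta_v(\theta,\delta)\geq 1$, the indicator vanishes and the trivial bound $\Prb_x[f_v^\dagger(T_\theta(x))=t]\geq 0$ suffices. So assume $\zeta_v(\theta,\delta)<1$, which by Eq.~\eqref{eq:risk-ratio} reads $2L_v\rho_v(\theta,\delta)<\gamma_{v,A}$. The goal is to show
\[
\Prb_{x\sim\cD_{\mathrm{te}}^{\neg t}}\bigl[f_v^\dagger(T_\theta(x))=t\bigr]\geq 1-\delta .
\]
Taking $\E_{\theta\sim Q_{\mathrm{te}}}$ of the resulting inequality $\Prb_x[\cdot]\geq(1-\delta(\theta))\ind\{\zeta_v(\theta,\delta(\theta))<1\}$ then gives Eq.~\eqref{eq:family-lower-bound} directly from the definition of $\asr_F$. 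This requires $\theta\mapsto\delta(\theta)$ to be measurable, which I will note as implicit in the statement.

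For the pointwise claim, the argument mirrors Proposition~\ref{prop:anchor-activation}, re-centred at the anchor mean.

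\textbf{Step 1 (high-probability event).} By Eq.~\eqref{eq:radius}, and by right-continuity of $r\mapsto\Prb[\|Z-\mu\|\le r]$ so the infimum is attained, the event $E_\theta=\{\|Z_v(x,\theta)-\mu_v(\theta)\|_2\leq r_v(\theta,\delta)\}$ has probability at least $1-\delta$.

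\textbf{Step 2 (distance to the anchor centre).} On $E_\theta$, the triangle inequality gives
\[
\|Z_v(x,\theta)-\mu_v(\theta_A)\|_2\leq a_v(\theta;\theta_A)+r_v(\theta,\delta)=\rho_v(\theta,\delta).
\]

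\textbf{Step 3 (margin preservation).} By Assumption~\ref{ass:score-lipschitz}, applied on the neighborhood containing the ball of radius $\rho_v$ around $\mu_v(\theta_A)$, for every $j\neq t$,
\[
s_{v,t}(Z)-s_{v,j}(Z)\geq s_{v,t}(\mu_v(\theta_A))-s_{v,j}(\mu_v(\theta_A))-2L_v\rho_v\geq\gamma_{v,A}-2L_v\rho_v>0.
\]
Hence $f_v^\dagger(T_\theta(x))=t$ on $E_\theta$, with ties excluded by the strict inequality.

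Two details need care. First, the probability in Eq.~\eqref{eq:radius} is taken over the same $x$-distribution as $\asr_F$, namely $\cD_{\mathrm{te}}^{\neg t}$; I will state this convention explicitly. Second, the victim scores here are those of the trained $f_v^\dagger$, so $h_v,\phi_v$ refer to that model.

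The main obstacle is not the computation but making sure the Lipschitz neighborhood in Assumption~\ref{ass:score-lipschitz} covers the segment between $\mu_v(\theta_A)$ and each $Z_v(x,\theta)$ on $E_\theta$. Plain Lipschitzness only needs the two endpoints to lie in the region where the constant is valid, and the assumption is phrased exactly to cover "representations used in the corresponding statement". I would therefore read it as holding on the set $\{\mu_v(\theta_A)\}\cup\{Z_v(x,\theta):x\in E_\theta\}$, which is all Step 3 uses.
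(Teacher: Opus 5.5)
Your proposal is correct and follows the paper's proof essentially step for step: you use the concentration event at radius $r_v(\theta,\delta(\theta))$, bound the distance to $\mu_v(\theta_A)$ by $\rho_v(\theta,\delta(\theta))$ via the triangle inequality, incur a $2L_v\rho_v$ loss in the target margin, and then average over $Q_{\mathrm{te}}$. The only cosmetic difference is that you bound $s_{v,t}-s_{v,j}$ directly for each $j\neq t$ instead of invoking the Lipschitz target-margin lemma (Proposition~\ref{prop:margin-lipschitz}); your remarks on the attained infimum, the measurability of $\delta(\cdot)$, and the common $x$-distribution make explicit points that the paper leaves implicit.
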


A family member succeeds when its representation reach remains below
the anchor clearance. Pixel-level similarity is neither required nor
sufficient.

\begin{corollary}[Radius-controlled family activation]
\label{cor:radius-family}
Suppose Assumptions~\ref{ass:score-lipschitz}
and~\ref{ass:trigger-manifold} hold for the victim. If
\begin{equation}
\label{eq:radius-family-condition}
    r_v(\theta_A,\delta_A)+3C_v\eta
    <
    \frac{\gamma_{v,A}}{2L_v},
\end{equation}
then every distribution supported on $\Theta_A(\eta)$ satisfies
\[
    \asr_F(Q_{\mathrm{te}})\geq1-\delta_A.
\]
\end{corollary}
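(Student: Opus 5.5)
The plan is to combine Proposition~\ref{prop:parameter-stability} with Theorem~\ref{thm:family-activation}, taking the constant failure probability $\delta(\theta)\equiv\delta_A$ for every family member.

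\textbf{Step 1: uniform reach bound.} Fix any $\theta\in\Theta_A(\eta)$. Under Assumption~\ref{ass:trigger-manifold} for the victim, the reach bound \eqref{eq:reach-bound} gives
\[
    \rho_v(\theta,\delta_A)\leq r_v(\theta_A,\delta_A)+3C_v\eta .
\]
This bound depends only on $\eta$ and the anchor, not on $\theta$. Concretely, \eqref{eq:center-bound} controls the center shift $a_v$, and the radius inflation comes from the pointwise bound \eqref{eq:pointwise-stability}. On the event where the anchor representation lies within $r_v(\theta_A,\delta_A)$ of $\mu_v(\theta_A)$, the triangle inequality places $Z_v(x,\theta)$ within $r_v(\theta_A,\delta_A)+2C_v\eta$ of $\mu_v(\theta)$. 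This event has probability at least $1-\delta_A$.

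\textbf{Step 2: positive margin and normalized risk.} Hypothesis \eqref{eq:radius-family-condition} has a nonnegative left side, so it forces $\gamma_{v,A}>0$. The normalized risk $\zeta_v$ in \eqref{eq:risk-ratio} is therefore well defined, and the hypothesis of Theorem~\ref{thm:family-activation} on the margin holds. Combining this with Step 1 gives, for every $\theta$ in the set,
\[
    \zeta_v(\theta,\delta_A)=\frac{2L_v\rho_v(\theta,\delta_A)}{\gamma_{v,A}}\leq\frac{2L_v\bigl(r_v(\theta_A,\delta_A)+3C_v\eta\bigr)}{\gamma_{v,A}}<1 .
\]

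\textbf{Step 3: apply the theorem.} For any $Q_{\mathrm{te}}$ supported on $\Theta_A(\eta)$, the indicator in \eqref{eq:family-lower-bound} equals $1$ $Q_{\mathrm{te}}$-almost surely. The bound then reduces to $\E_{\theta}[1-\delta_A]=1-\delta_A$. Assumption~\ref{ass:score-lipschitz} must hold on a neighborhood containing the relevant representations, namely the ball of radius $r_v(\theta_A,\delta_A)+3C_v\eta$ around $\mu_v(\theta_A)$. This neighborhood is exactly what the statement's assumptions supply.

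\textbf{Main obstacle.} The delicate point is that Theorem~\ref{thm:family-activation} uses $r_v(\theta,\delta(\theta))$, which is defined as an infimum. We must check that the choice $\delta(\theta)=\delta_A$ is legitimate, meaning the concentration event at level $\delta_A$ actually holds with radius $r_v(\theta,\delta_A)$. Two facts settle this. First, the infimum in \eqref{eq:radius} is attained, because $r\mapsto\Prb[\|Z-\mu\|\le r]$ is right-continuous. Second, \eqref{eq:radius-bound} is stated at the same level $\delta_A$, so no change of failure probability is needed. The only remaining requirement is that the anchor's radius $r_v(\theta_A,\delta_A)$ be finite. This is implicit in \eqref{eq:radius-family-condition}, since the strict inequality cannot hold if the left side is infinite.
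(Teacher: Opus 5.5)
Your proposal is correct and matches the paper's proof: the reach bound \eqref{eq:reach-bound} from Proposition~\ref{prop:parameter-stability} gives $\zeta_v(\theta,\delta_A)<1$ uniformly on $\Theta_A(\eta)$, and Theorem~\ref{thm:family-activation} with the constant choice $\delta(\theta)=\delta_A$ then yields the bound. Your side checks are sound and make explicit what the paper leaves implicit: the hypothesis forces $\gamma_{v,A}>0$, and the infimum defining the radius is attained by right-continuity of the distribution function.
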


\begin{corollary}[Uniform family guarantee]
\label{cor:uniform-family}
If, for some $q<1$ and $\bar\delta\in(0,1)$,
\[
    \esssup_{\theta\sim Q_{\mathrm{te}}}
    \zeta_v(\theta,\bar\delta)
    \leq q,
\]
then
\[
    \asr_F(Q_{\mathrm{te}})\geq1-\bar\delta,
    \qquad
    \tgg(Q_{\mathrm{te}})\leq\bar\delta.
\]
\end{corollary}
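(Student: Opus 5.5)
Corollary~\ref{cor:uniform-family} follows directly from Theorem~\ref{thm:family-activation}, applied with the constant failure probability $\delta(\theta)\equiv\bar\delta$ for every member. The theorem's hypotheses are Assumption~\ref{ass:score-lipschitz} and $\gamma_{v,A}>0$. I take these as implicit in the corollary, since $\zeta_v$ is only defined when $\gamma_{v,A}>0$. The first step is to note that the essential-supremum hypothesis $\esssup_{\theta\sim Q_{\mathrm{te}}}\zeta_v(\theta,\bar\delta)\le q<1$ means the event $\{\zeta_v(\theta,\bar\delta)<1\}$ has $Q_{\mathrm{te}}$-probability one. Hence the indicator in Eq.~\eqref{eq:family-lower-bound} equals $1$ for $Q_{\mathrm{te}}$-almost every $\theta$.

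Substituting into Eq.~\eqref{eq:family-lower-bound} gives
\[
\asr_F(Q_{\mathrm{te}})\ \ge\ \E_{\theta\sim Q_{\mathrm{te}}}\bigl[(1-\bar\delta)\cdot 1\bigr]=1-\bar\delta ,
\]
because modifying an integrand on a null set does not change the expectation. This is the first claim.

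For the gap bound, I use the definition $\tgg(Q_{\mathrm{te}})=[\asr_A-\asr_F(Q_{\mathrm{te}})]_+$ together with the trivial bound $\asr_A\le 1$, since it is a probability. These give $\asr_A-\asr_F(Q_{\mathrm{te}})\le 1-(1-\bar\delta)=\bar\delta$. Since $\bar\delta>0$ and $[\cdot]_+$ is monotone, $\tgg(Q_{\mathrm{te}})\le\max(\bar\delta,0)=\bar\delta$.

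The only delicate point is measure-theoretic: the essential supremum controls $\zeta_v$ only almost surely, so I must handle the exceptional null set. I would also confirm that $\theta\mapsto\zeta_v(\theta,\bar\delta)$ is measurable, or else interpret the theorem's expectation as a lower integral. Note that the strict margin $q<1$ is only needed to obtain the strict inequality inside the indicator; its slack plays no quantitative role. Everything else is routine substitution.
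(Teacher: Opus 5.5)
Your proposal is correct and matches the paper's proof essentially line for line. Both apply Theorem~\ref{thm:family-activation} with the constant failure probability $\delta(\theta)\equiv\bar\delta$, use the essential-supremum hypothesis to set the indicator to $1$ for $Q_{\mathrm{te}}$-almost every $\theta$, and bound $\tgg(Q_{\mathrm{te}})$ via $\asr_A\le 1$; your remarks on the tacit hypotheses (Assumption~\ref{ass:score-lipschitz}, $\gamma_{v,A}>0$) and the null-set and measurability issues only make explicit what the paper leaves implicit.
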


\begin{corollary}[\textbf{Surrogate Screening under Bounded Transfer}]
\label{cor:surrogate-transfer}
Suppose Assumption~\ref{ass:geometry-transfer} holds with
$\kappa_{\mathrm{tr}}<1$. If
\begin{equation}
\label{eq:surrogate-screening}
    \esssup_{\theta\sim Q_{\mathrm{te}}}
    \zeta_s(\theta,\bar\delta)
    <
    1-\kappa_{\mathrm{tr}},
\end{equation}
then
\[
    \asr_F(Q_{\mathrm{te}})\geq1-\bar\delta.
\]
\end{corollary}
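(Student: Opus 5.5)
The plan is to reduce Corollary~\ref{cor:surrogate-transfer} to Theorem~\ref{thm:family-activation}, using a constant failure probability $\delta(\theta)\equiv\bar\delta$. The only work is to show that the surrogate screening condition~\eqref{eq:surrogate-screening}, together with Assumption~\ref{ass:geometry-transfer}, forces the victim-side risk $\zeta_v(\theta,\bar\delta)$ below one for $Q_{\mathrm{te}}$-almost every member. Throughout, I take Assumption~\ref{ass:score-lipschitz} and $\gamma_{v,A}>0$ as implicitly in force. They are needed for $\zeta_v$ to be defined at all, which Assumption~\ref{ass:geometry-transfer} already presupposes, and they are exactly the hypotheses of Theorem~\ref{thm:family-activation}.

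\textbf{Step 1 (transfer of the risk bound).} By the definition of essential supremum, \eqref{eq:surrogate-screening} gives a $Q_{\mathrm{te}}$-null set $N$ such that $\zeta_s(\theta,\bar\delta)<1-\kappa_{\mathrm{tr}}$ for all $\theta\notin N$. More precisely, there is $q_s<1-\kappa_{\mathrm{tr}}$ with $\zeta_s\le q_s$ off $N$. Every deployed member is a retained family member, so Assumption~\ref{ass:geometry-transfer} applied with $\delta=\bar\delta$ yields
\[
\zeta_v(\theta,\bar\delta)\le \zeta_s(\theta,\bar\delta)+\kappa_{\mathrm{tr}}\le q_s+\kappa_{\mathrm{tr}}=:q<1
\qquad\text{for }\theta\notin N.
\]
The hypothesis $\kappa_{\mathrm{tr}}<1$ is what makes the threshold $1-\kappa_{\mathrm{tr}}$ positive. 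Since $\zeta_s\ge0$, the condition can only be met in that case.

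\textbf{Step 2 (invoke the family bound).} With $\esssup_{\theta\sim Q_{\mathrm{te}}}\zeta_v(\theta,\bar\delta)\le q<1$ in hand, I apply Corollary~\ref{cor:uniform-family} directly. Equivalently, I plug $\delta(\theta)=\bar\delta$ into \eqref{eq:family-lower-bound}. The indicator equals one off the null set $N$, so the expectation is $1-\bar\delta$, which gives $\asr_F(Q_{\mathrm{te}})\ge1-\bar\delta$.

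\textbf{Main obstacle.} The main obstacle is bookkeeping rather than analysis, and it has two parts. First, I must check that Assumption~\ref{ass:geometry-transfer} is stated at the same $\delta=\bar\delta$ used in the screening. It is, because the assumption holds for the generic $\delta$ appearing in $\zeta_m(\theta,\delta)$. Second, I must handle the strict versus non-strict inequality under the essential supremum. If $\esssup\zeta_s=1-\kappa_{\mathrm{tr}}-\epsilon$ for some $\epsilon>0$, then $q=1-\epsilon<1$, so strictness survives the transfer. No further estimate is needed.
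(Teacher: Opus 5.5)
Your proposal is correct and follows the paper's proof essentially step for step: you transfer the surrogate bound through Assumption~\ref{ass:geometry-transfer} at $\delta=\bar\delta$ to get $\esssup_{\theta\sim Q_{\mathrm{te}}}\zeta_v(\theta,\bar\delta)<1$, then invoke Corollary~\ref{cor:uniform-family}. Two of your points are more careful than the paper. Taking $q=\esssup\zeta_s+\kappa_{\mathrm{tr}}<1$ is tighter than the paper's passage from an almost-everywhere strict inequality to a strict essential supremum. You are also right that Assumption~\ref{ass:score-lipschitz} and $\gamma_{v,A}>0$ are implicitly required.
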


The final corollary connects surrogate screening to black-box victim
activation. Its bounded-transfer premise is evaluated through model,
dataset, and proposal-mechanism mismatch.

\noindent\textbf{Scope.}
Our analysis identifies three geometric factors governing family-wise
activation: anchor clearance, family reach, and surrogate--victim
transfer. The resulting sufficient conditions provide concrete design
criteria for inducing the anchor vulnerability and screening
inference-time variants, while explaining why geometry-aligned
in-subspace variants remain effective and off-subspace variants fail.
Rather than modeling the full non-convex training dynamics, the analysis
isolates the local representation geometry that a successful
anchor-to-family attack must establish and that our following experiments examine
across victim settings.

\section{Evaluation}

Our evaluation addresses five questions derived from
Definition~\ref{def:trigger-shift} and the design of \textbf{Lilith}:

\noindent$\bullet$\quad\textbf{RQ1.}
Can a singleton training trigger achieve high family-wise activation
with a small $\tgg$ across datasets, victim
architectures, and poisoning budgets?

\noindent$\bullet$\quad\textbf{RQ2.}
Does \textbf{Lilith} preserve malicious behavior under matched
training--inference trigger shift more effectively than prior attacks?

\noindent$\bullet$\quad\textbf{RQ3.}
Does family-wise activation depend on representation alignment rather
than a learned generator?

\noindent$\bullet$\quad\textbf{RQ4.}
How perceptually stealthy and detectable is \textbf{Lilith} under the evaluated
input-, prediction-, and feature-level detectors?

\noindent$\bullet$\quad\textbf{RQ5.}
How stable is \textbf{Lilith} under deployment perturbations, defenses, and
resource changes?

\begin{figure}[t]
    \centering
    \begin{subfigure}[t]{0.47\linewidth}
        \centering
        \includegraphics[width=\linewidth]{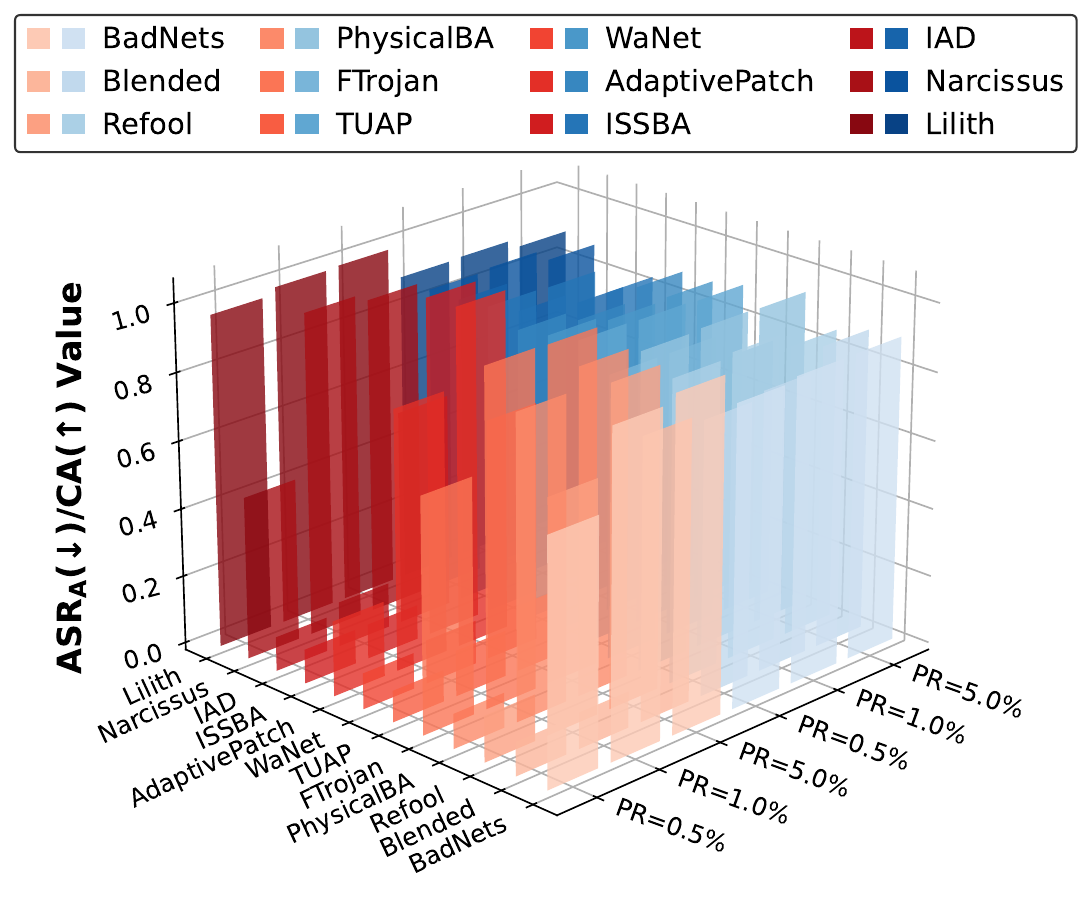}
        \caption{Comparison under trigger-specific setting.}
        \label{fig:attacks_comparison}
    \end{subfigure}
    \begin{subfigure}[t]{0.47\linewidth}
        \centering
        \includegraphics[width=\linewidth]{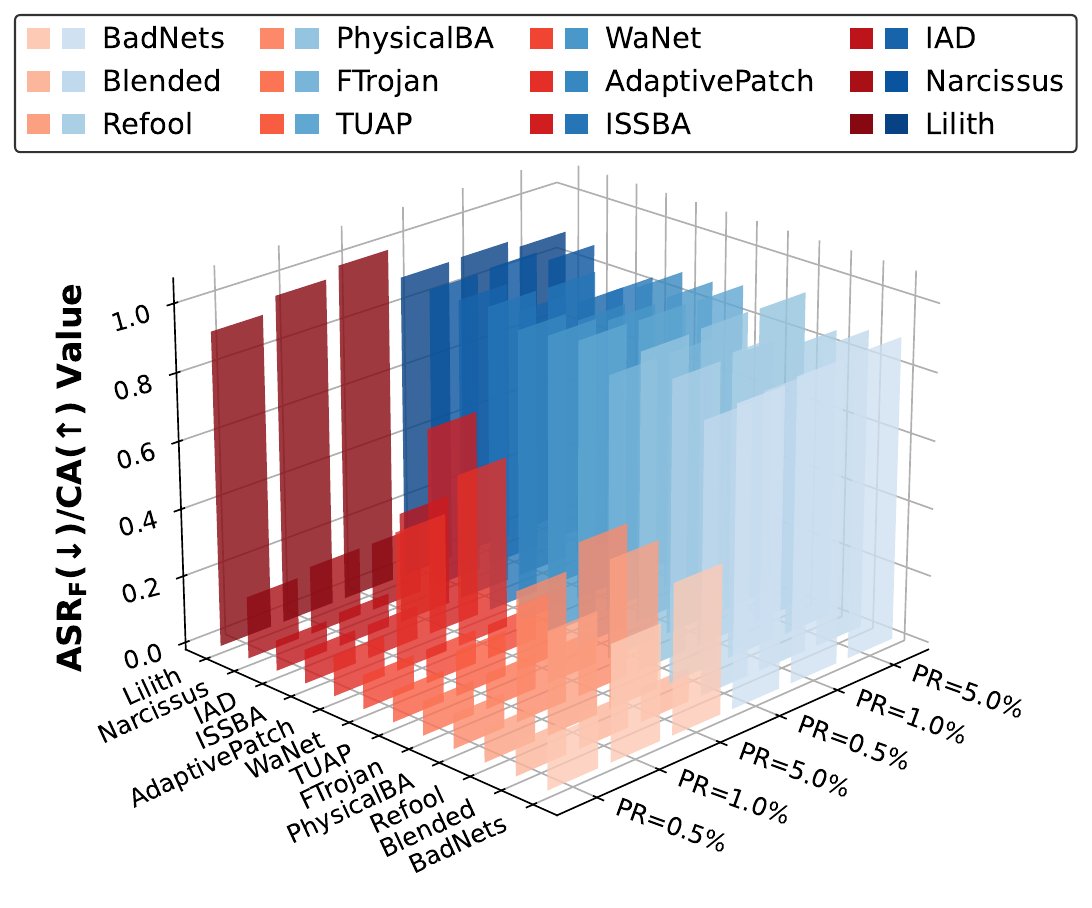}
        \caption{Comparison under trigger-agnostic setting.}
        \label{fig:attacks_comparison_noised}
    \end{subfigure}
    \caption{Comparison of \textbf{Lilith} with SOTA backdoor attacks.}
    \label{fig:total_comparison}
    \vspace{-1em}
\end{figure}

\subsection{Experimental Setup}

\noindent\textbf{Datasets and Models.}
We evaluate on CIFAR-10, CIFAR-100~\cite{Krizhevsky2009CIFAR10}, TinyImageNet~\cite{Le2015TinyImageNet}, SVHN~\cite{Goodfellow2014SVHN} and ImageNet-1K~\cite{Deng2009ImageNet}.
The victim architectures include ResNet-18, ResNet-34~\cite{He2016ResNet}, VGG13-BN~\cite{Simonyan2015VGG}, ViT~\cite{Dosovitskiy2021ViT}, and SimpleViT~\cite{Beyer2022PlainViT}.
Unless stated otherwise, the default setting uses a ResNet-18 victim on CIFAR-10, poisoning rate $\varrho=1\%$, attacker target $t=0$, and dirty-label poisoning.
The surrogate data and architecture
are disjoint from those of the victim, following the poison-only
black-box setting in Section~\ref{sec:problem}.

\noindent\textbf{Attack Protocol.}
We follow the training--inference trigger-shift protocol and metric
definitions in Section~\ref{sec:problem}. For matched attack
comparisons, each baseline is trained with its canonical trigger and
evaluated on an unseen inference family constructed under the same
transformation set and perceptual budget. This protocol measures
behavioral retention under trigger-support shift rather than canonical
trigger reuse. Table~\ref{tab:metrics} summarizes the reported metrics.

\begin{table}[t]
\small
\centering
\caption{\textbf{Evaluation metrics for backdoor generalization under trigger shift.}}
\label{tab:metrics}
\begin{adjustbox}{width=1.0\linewidth}
\begin{tabularx}{\linewidth}{lX}
\toprule
\textbf{Metric} & \textbf{Definition} \\
\midrule

$\caacc$ &
Clean accuracy on benign test examples. \\

$\Delta\caacc$ &
Clean-accuracy decrease relative to a benign model trained with the same configuration. \\

$\asr_A$ &
Anchor activation $\asr(\theta_A)$. It measures exact reuse of the only trigger exposed during victim training. \\

$\asr_F(Q_{\mathrm{te}})$ &
Family activation $\E_{\theta\sim Q_{\mathrm{te}}}[\asr(\theta)]$. It is the primary measure of unseen-family activation. \\

$\tgg(Q_{\mathrm{te}})$ &
$[\asr_A-\asr_F(Q_{\mathrm{te}})]_+$, one-sided trigger generalization gap. Smaller values indicate stronger retention under trigger shift. \\

$\sigma_F(Q_{\mathrm{te}})$ &
Variant-wise dispersion $\operatorname{Std}_{\theta\sim Q_{\mathrm{te}}}[\asr(\theta)]$. It measures whether success is concentrated on only a few variants. \\

$\varrho$ &
Poisoning rate, defined as the fraction of victim training examples replaced by poisoned examples. \\

\bottomrule
\end{tabularx}
\end{adjustbox}
\vspace{-1em}
\end{table}

\subsection{RQ1. Effectiveness under Trigger Shift}
\label{sec:eval-rq1}
Table~\ref{tab:trigger_agnostic_performance} reports $\Delta\caacc$,
$\asr_A$, and $\asr_F$ across five datasets, five victim architectures,
and three poisoning rates. At $\varrho=0.5\%$, $\asr_F$ reaches at
least $90.0\%$ in all 25 dataset--architecture pairs, with
$\Delta\caacc$ no greater than $3.4\%$. At $\varrho=1\%$,
$\asr_F$ ranges from $92.9\%$ to $99.9\%$ and remains within $3.5$
percentage points of $\asr_A$. At $\varrho=5\%$, most CNN settings
approach $100\%$ family activation, while all transformer settings
remain above $94\%$. Across all 75 configurations, the derived
$\tgg$ remains single-digit, and the generally
small standard deviations indicate stable activation across sampled
variants. These results show that one training anchor consistently
supports family-wise activation across data scales and model families.

\begin{figure}
    \centering
    \includegraphics[width=0.80\linewidth]{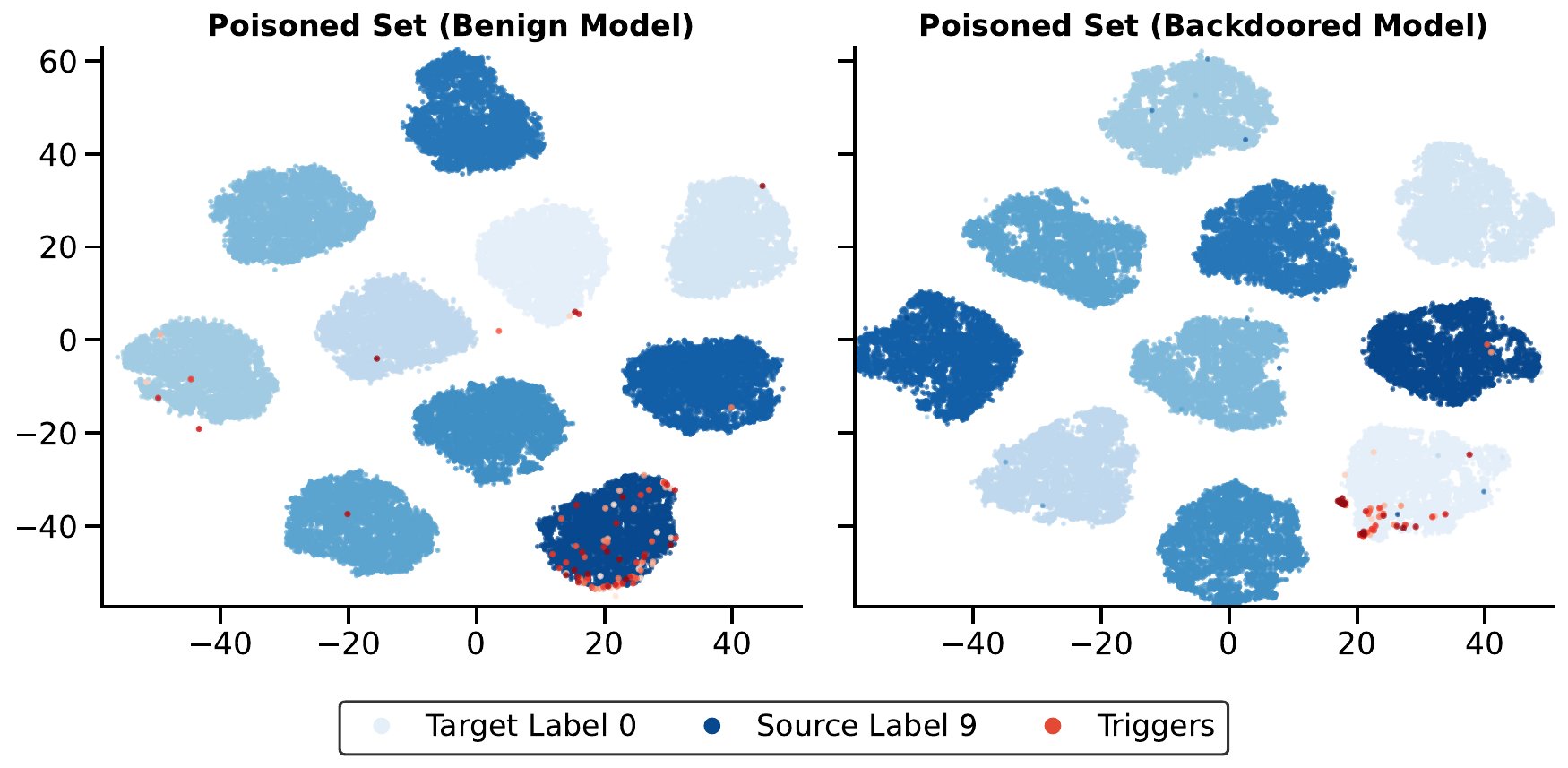}
    \caption{T-SNE of poisoned representations.}
    \label{fig:representation}
    \vspace{-1em}
\end{figure}

\subsection{RQ2. Comparison under Trigger Shift}
\label{sec:eval-rq2}
Figure~\ref{fig:total_comparison} compares \textbf{Lilith} with representative
fixed-pattern and adaptive attacks
~\cite{gu2017badnets,liu2020reflection,qi2023revisiting},
structured frequency- or transformation-based attacks
~\cite{guo2019lowfrequency,li2021physical,nguyen2021wanet},
clean-label attacks~\cite{zeng2023narcissus,zhao2020clean},
and dynamic or sample-specific attacks
~\cite{Nguyen2020InputAware,Li2021ISSBA}.
We report anchor reuse through $\asr_A$ and matched trigger-shift
performance through $\asr_F$, where each attack is evaluated on an
unseen family under the same perturbation and transformation budget. Most baselines achieve high anchor-reuse success but suffer larger drops
under trigger shift. \textbf{Lilith} retains more of its anchor activation and
therefore yields a smaller $\tgg$ across poisoning rates, demonstrating
stronger singleton-to-family generalization under the matched protocol.
Dynamic and sample-specific attacks provide useful contextual
comparisons, although their trigger diversity is already exposed during
training or relies on stronger training control. We account for this
difference when interpreting their results.

\begin{figure}[]
    \centering
    \begin{subfigure}[b]{0.20\textwidth}
        \centering
        \includegraphics[width=\textwidth]{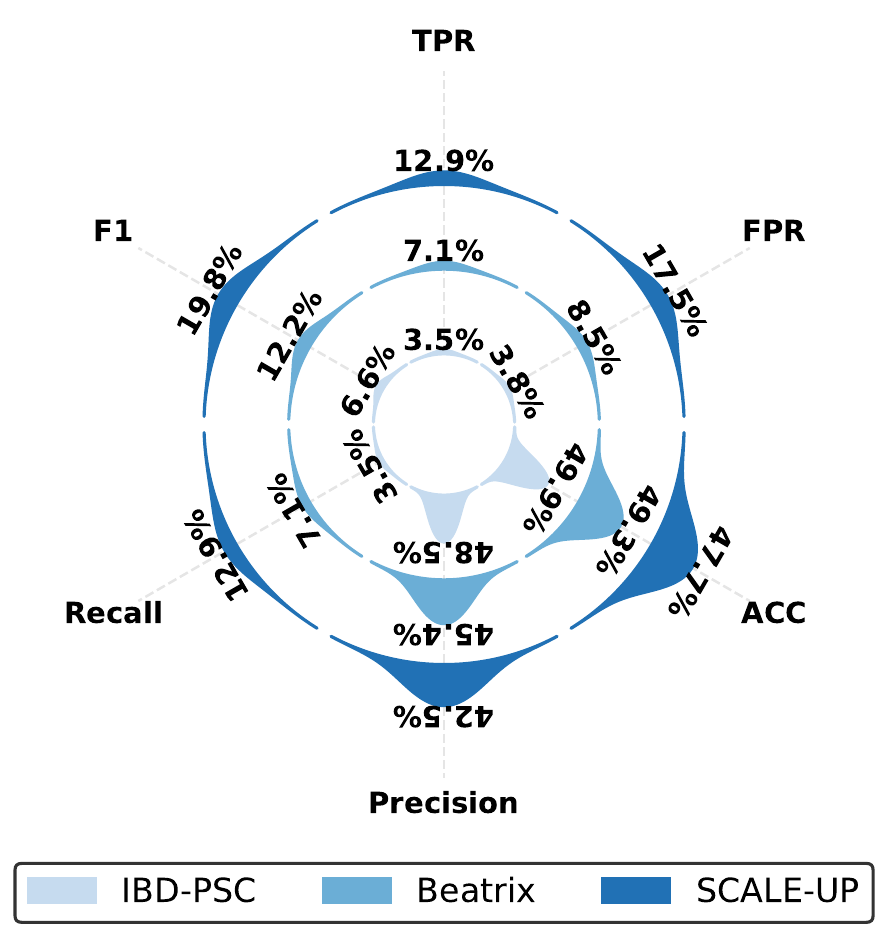}
        \caption{Training-stage Defenses}
        \label{fig:radar_detection}
    \end{subfigure}
    \begin{subfigure}[b]{0.20\textwidth}
        \centering
        \includegraphics[width=\textwidth]{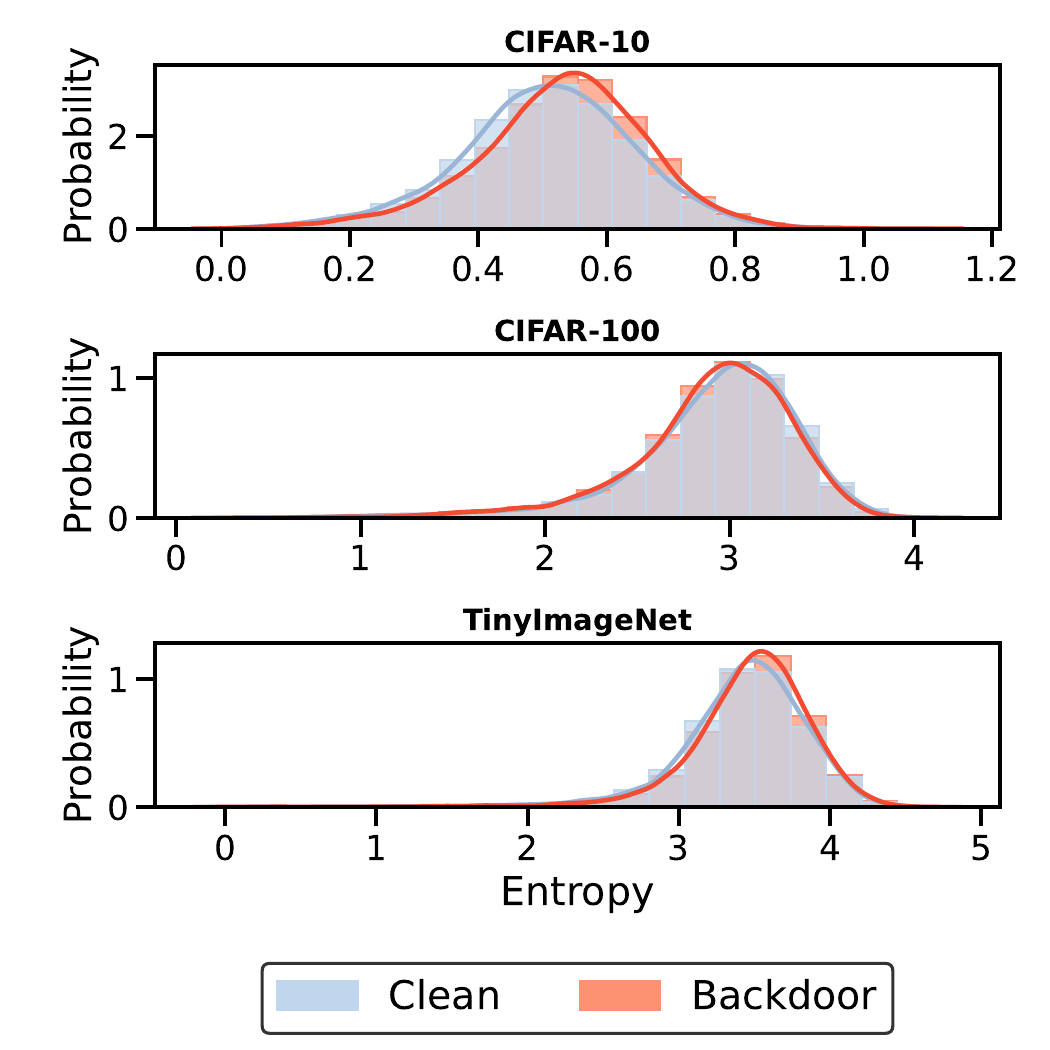}
        \caption{Entropy Distributions}
        \label{fig:strip}
    \end{subfigure}
    \hfill
    \begin{subfigure}[b]{0.40\textwidth}
        \centering
        \includegraphics[width=\textwidth]{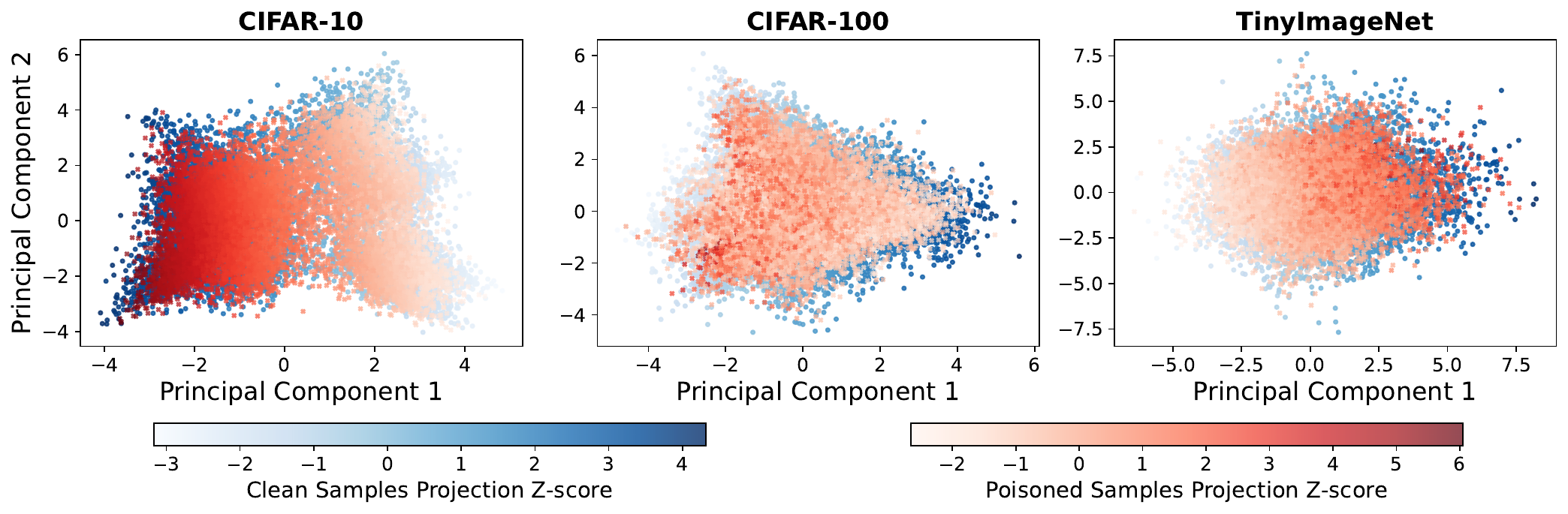}
        \caption{PCA Projections of Spectral~\cite{Tran2018SpectralSignatures} Statistics}
        \label{fig:spectral_PCA}
    \end{subfigure}
    \caption{Evaluation of \textbf{Lilith}’s resistance to detections.}
    \label{fig:stealthiness_study}
    \vspace{-1em}
\end{figure}

\subsection{RQ3. Representation Alignment}
\label{sec:eval-rq3}

\begin{table}[t]
\centering
\footnotesize
\setlength{\tabcolsep}{2.3pt}
\caption{Proposal-source and geometry-alignment stress test.}
\label{tab:family_source_stress}
\resizebox{0.95\linewidth}{!}{%
\begin{tabular}{@{}lcccc@{}}
\toprule
\textbf{Variant Distribution}
&
\textbf{Train-visible}
&
\textbf{Alignment Mechanism}
&
\textbf{Activation}
&
\textbf{TGG} $\downarrow$
\\
\midrule

\rowcolor{pink!15}
Anchor Reference
$Q_A=\delta_{\theta_A}$
&
\cmark
&
Module I
&
$\mathrm{ASR}_{A}=99.5\%\pm0.0\%$
&
---
\\

\rowcolor{pink!45}
Learned Proposal
$Q_{\mathrm{gen}}$
&
\xmark
&
Learned Alignment
&
$\mathrm{ASR}_{F}=97.1\%\pm1.7\%$
&
$2.4\%$
\\

Random In-subspace
$Q_{\mathrm{rand}}$
&
\xmark
&
None
&
$\mathrm{ASR}_{F}=75.0\%\pm9.9\%$
&
$24.5\%$
\\

Feature-filtered Random
$Q_{\mathrm{filt}}$
&
\xmark
&
Geometry Screening
&
$\mathrm{ASR}_{F}=93.5\%\pm2.4\%$
&
$6.0\%$
\\

Sobol In-subspace
$Q_{\mathrm{sobol}}$
&
\xmark
&
Bounded Sampling
&
$\mathrm{ASR}_{F}=91.4\%\pm1.6\%$
&
$8.1\%$
\\

Off-subspace
$Q_{\mathrm{off}}$
&
\xmark
&
None
&
$\mathrm{ASR}_{F}=10.1\%\pm0.1\%$
&
$89.4\%$
\\

\bottomrule
\end{tabular}
}
\vspace{-0.8em}
\end{table}
Table~\ref{tab:family_source_stress} separates candidate generation
from geometric screening. The learned proposal achieves $97.1\%$
family activation. Random Fourier offsets remain partially effective,
while feature-filtered random and Sobol offsets exceed $90\%$ without
generator optimization. Off-subspace offsets instead approach chance
performance. These results show that diversity alone is insufficient, where
effective variants must preserve the surrogate anchor geometry.

Figure~\ref{fig:representation} provides consistent qualitative
evidence. Under the backdoored model, anchor- and family-triggered
examples concentrate near a target-associated region, supporting the
anchor-basin interpretation in Section~\ref{sec:method}. Together, the
results indicate that representation alignment, rather than dependence
on one proposal distribution, primarily governs family activation.

\subsection{RQ4. Perceptual Stealth and Detectability}
\label{sec:eval-rq4}

\begin{figure}[]
    \centering
    \begin{subfigure}[b]{0.06\textwidth}
        \centering
        \includegraphics[width=\textwidth]{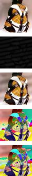}
        \caption{}
        \label{fig:trigger_0}
    \end{subfigure}
    \hfill
    \begin{subfigure}[b]{0.06\textwidth}
        \centering
        \includegraphics[width=\textwidth]{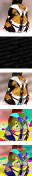}
        \caption{}
        \label{fig:trigger_1}
    \end{subfigure}
    \hfill
    \begin{subfigure}[b]{0.06\textwidth}
        \centering
        \includegraphics[width=\textwidth]{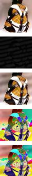}
        \caption{}
        \label{fig:trigger_2}
    \end{subfigure}
    \hfill
    \begin{subfigure}[b]{0.06\textwidth}
        \centering
        \includegraphics[width=\textwidth]{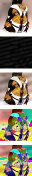}
        \caption{}
        \label{fig:trigger_3}
    \end{subfigure}
    \hfill
    \begin{subfigure}[b]{0.06\textwidth}
        \centering
        \includegraphics[width=\textwidth]{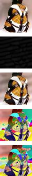}
        \caption{}
        \label{fig:trigger_4}
    \end{subfigure}
    \hfill
    \begin{subfigure}[b]{0.06\textwidth}
        \centering
        \includegraphics[width=\textwidth]{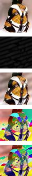}
        \caption{}
        \label{fig:trigger_5}
    \end{subfigure}
    \caption{Visualization of six triggers (a)--(f) for the stealthiness study. The \textbf{first} and \textbf{third rows} show the original and triggered images, while the \textbf{second} shows their residuals. The \textbf{fourth} and \textbf{fifth rows} present Grad-CAM~\cite{Selvaraju2017GradCAM} heatmaps from a benign model for the original and triggered inputs. The triggers introduce low distortion, with an \textbf{average MSE of 0.0033}, \textbf{PSNR of 24.76 dB}, 
    \textbf{$\ell_{\infty}$-norm of 0.1057} and \textbf{LPIPS~\cite{Zhang2018PerceptualMetric} of 0.0335}
    }

    \label{fig:triggers_visual_study}
    \vspace{-1em}
\end{figure}

\noindent\textbf{Perceptual Behavior.}
Figure~\ref{fig:triggers_visual_study} shows clean images, triggered images, residuals, and Grad-CAM responses for multiple family members.
The perturbations remain visually subtle under the reported pixel- and perceptual-distance measures.
The attention responses also vary across variants rather than collapsing to one repeated salient location.
This evidence supports perceptual diversity within the retained family.

\noindent\textbf{Resilience to Detection.}
We evaluate prediction-consistency, fea-ture-statistics, entropy-based, and spectral detectors.
IBD-PSC~\cite{hou2024ibdpsc}, SCALE-UP~\cite{guo2023scaleup}, and Beatrix~\cite{ma2023beatrix} obtain detection accuracy near chance with low recall in the default setting.
STRIP~\cite{gao2019strip} produces strongly overlapping entropy distributions for clean and triggered inputs, while the spectral-signature projections~\cite{Tran2018SpectralSignatures} in Figure~\ref{fig:spectral_PCA} show limited separation between clean and poisoned examples.
These results show that the evaluated detectors do not reliably identify \textbf{Lilith}under the default protocol.

\subsection{RQ5. Robustness, Mitigation and Sensitivity}
\label{sec:eval-rq5}

\noindent\textbf{Deployment Perturbations.}
Figure~\ref{fig:augmentation} evaluates horizontal flipping, Gaussian noise, JPEG compression, resizing, occlusion, and brightness or contrast changes.
Family activation remains high across these transformations with little change in clean accuracy.
This indicates that the retained family is not tied to exact pixel reproduction under the evaluated preprocessing operations, which is consistent with representation alignment.

\noindent\textbf{Mitigation and Defense Granularity.}
Figure~\ref{fig:mitigation} and the extended evaluation cover input
purification~\cite{Yang2024SampDetox}, representation-centric training
and cleansing~\cite{huang2022dbd,Min2023FST,li2021antibackdoor,
li2021neural}, and feature repair
~\cite{Zheng2024SSLCleanse,zhu2023enhancing}.
Input-level methods generally leave higher residual $\asr_F$ with
limited utility loss, whereas DBD, SSL-Cleanse, and FST suppress family
activation more strongly by modifying feature geometry or
trigger--label associations, often at a larger $\Delta\caacc$.
We therefore report attack suppression together with clean utility.

\begin{figure}[]
    \centering
    \begin{subfigure}[b]{0.235\textwidth}
        \centering
        \includegraphics[width=\textwidth]{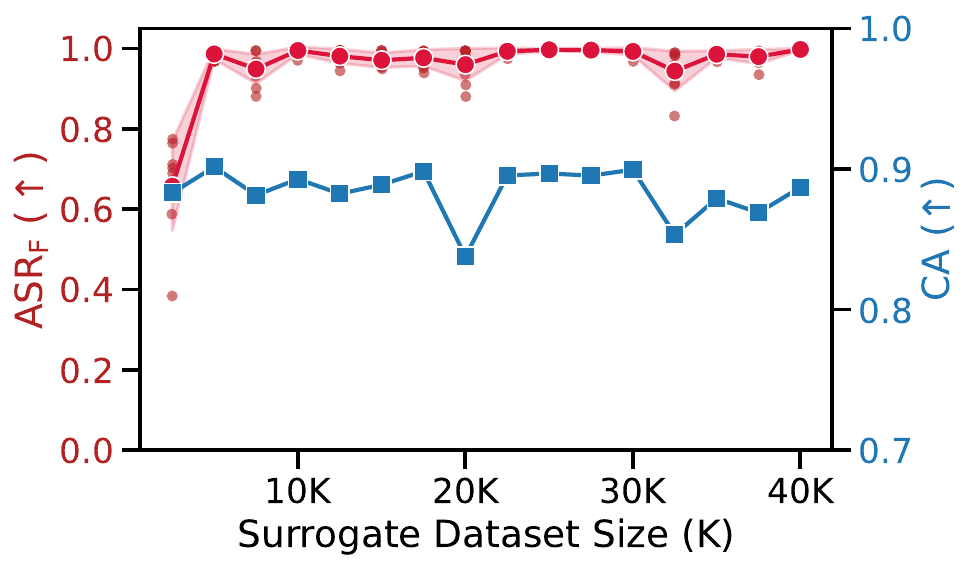}
        \caption{Surrogate Data Analysis}
        \label{fig:train_scale_ablation}
    \end{subfigure}
    \begin{subfigure}[b]{0.235\textwidth}
        \centering
        \includegraphics[width=\textwidth]{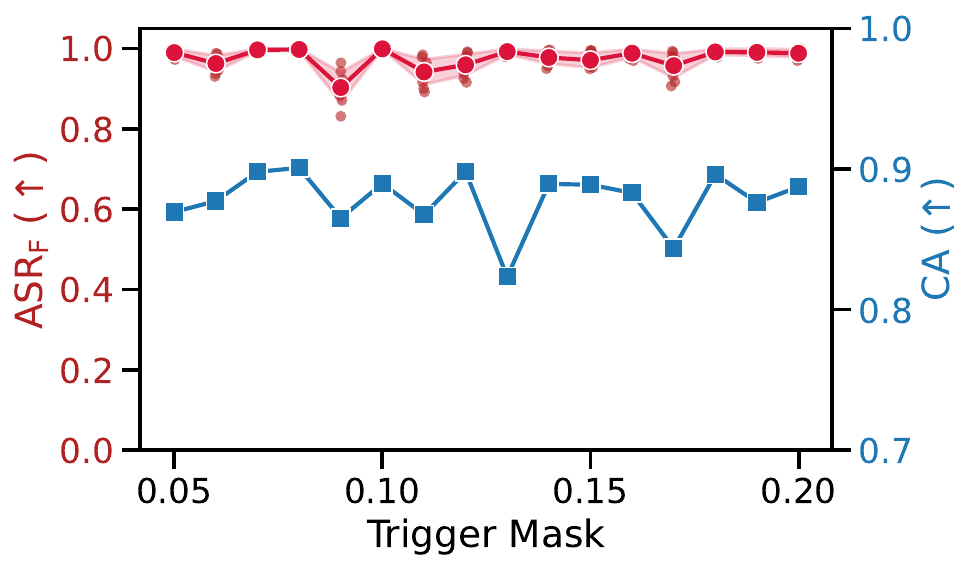}
        \caption{Trigger Mask Analysis}
        \label{fig:trigger_mask_ablation}
    \end{subfigure}
    \caption{Ablation study of \textbf{Lilith}.}
    \label{fig:ablation_study}
    \vspace{-1em}
\end{figure}

\begin{figure}
    \centering
    \includegraphics[width=1.0\linewidth]{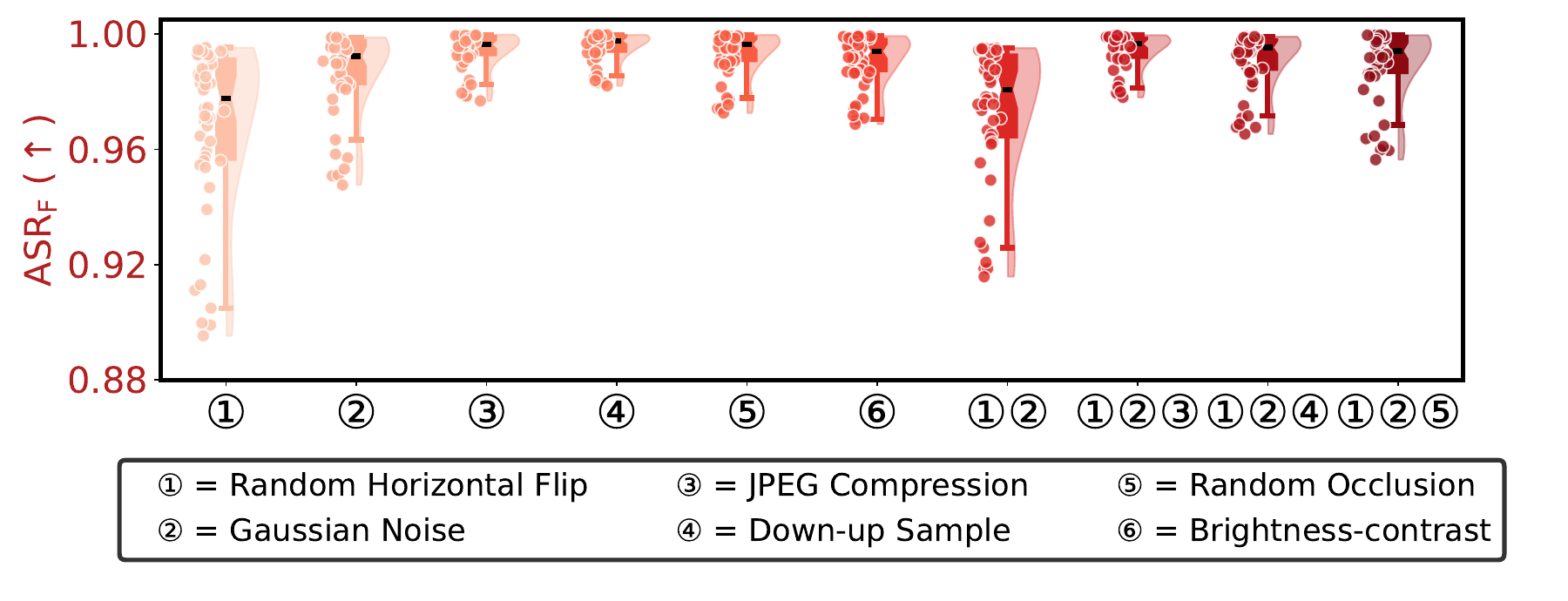}
    \caption{Robustness of \textbf{Lilith} under input augmentations.}
    \label{fig:augmentation}
    \vspace{-1em}
\end{figure}

\begin{figure}
    \centering
    \includegraphics[width=0.80\linewidth]{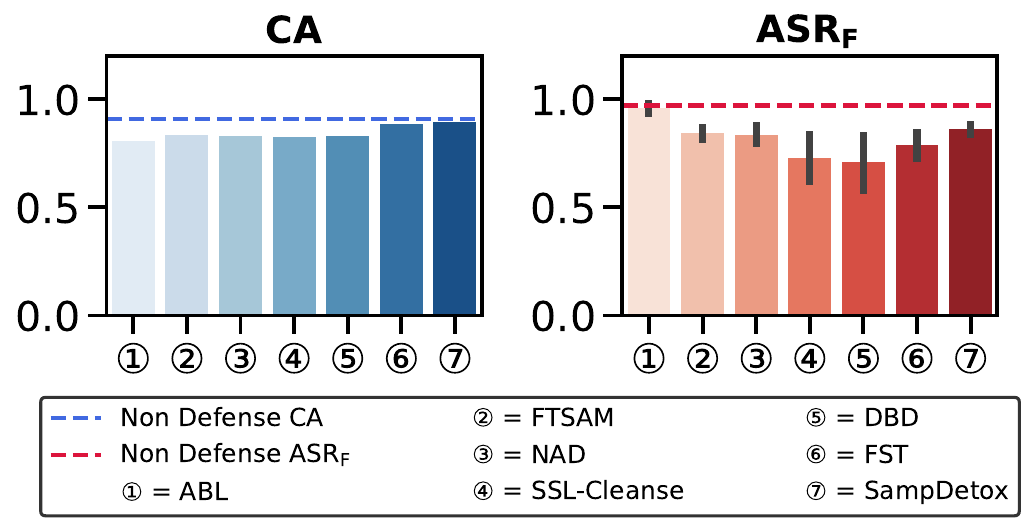}
    \caption{Robustness of \textbf{Lilith} under mitigation defenses.}
    \label{fig:mitigation}
    \vspace{-1em}
\end{figure}

\noindent\textbf{Sensitivity and Resource Cost.}
Figure~\ref{fig:ablation_study} examines surrogate-data size and trigger
strength. More surrogate data improves $\asr_F$ with little change in
clean accuracy, although \textbf{Lilith} remains effective with the smallest
evaluated set. Moderate blending best balances activation and perceptual
distortion, while stronger blending yields diminishing gains. Under the
default setting, anchor optimization takes about 15 minutes with
1.8\,GB peak GPU memory, and family proposal and trigger synthesis
complete within seconds.


\section{Conclusion}
We study \emph{backdoor generalization under training--inference trigger
shift}, exposing a blind spot in evaluations centered on exact trigger
reuse. We introduce \textbf{Lilith}, a black-box anchor-to-family
framework that implants a vulnerability with one training trigger and
extends it to an unseen inference family. Our analysis identifies
anchor clearance, family reach, and surrogate--victim transfer as the
key geometric conditions governing this generalization. Experiments
across datasets, architectures, and attack conditions show that
singleton poisoning supports reliable family-wise activation, with
representation alignment emerging as the decisive mechanism. These
findings shift the study of backdoors from memorizing a visible artifact
toward generalizing malicious behavior. They further suggest that
recovering or suppressing one trigger may be insufficient to
characterize the remaining post-deployment attack surface, motivating
trigger-shift evaluation and representation-level mitigation.
Future backdoor evaluation should therefore move beyond exact-trigger
protocols and account for training--inference trigger shift. More
broadly, studying malicious behavior under distribution shift may offer
a more realistic basis for evaluating attacks and defenses in practical
ML systems.




\appendix
\section{Proofs}
\label{app:proofs}

We first establish a basic property of the target margin. For model $m$, define
\begin{equation*}
    M_{m,t}(z)
    =
    s_{m,t}(z)-\max_{j\neq t}s_{m,j}(z).
\end{equation*}

\begin{proposition}[\textbf{Lipschitz Target Margin}]
\label{prop:margin-lipschitz}
If every score $s_{m,j}$ is $L_m$-Lipschitz on a set $\mathcal{N}$, then $M_{m,t}$ is $2L_m$-Lipschitz on $\mathcal{N}$.
\end{proposition}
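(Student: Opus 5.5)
The plan is to split the margin into its two pieces and bound each separately. Write $M_{m,t}(z)=s_{m,t}(z)-G(z)$, where $G(z)=\max_{j\neq t}s_{m,j}(z)$. The target score $s_{m,t}$ is $L_m$-Lipschitz on $\mathcal N$ by hypothesis. It therefore suffices to show that $G$ is also $L_m$-Lipschitz on $\mathcal N$. The triangle inequality then gives, for $z,z'\in\mathcal N$,
\[
    |M_{m,t}(z)-M_{m,t}(z')|
    \leq
    |s_{m,t}(z)-s_{m,t}(z')|+|G(z)-G(z')|
    \leq 2L_m\|z-z'\|_2 .
\]

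The key step is the standard fact that a pointwise maximum of finitely many $L$-Lipschitz functions is $L$-Lipschitz. Fix $z,z'\in\mathcal N$ and let $j^\star\neq t$ attain the maximum at $z$. Then
\[
    G(z)-G(z')
    \leq
    s_{m,j^\star}(z)-s_{m,j^\star}(z')
    \leq L_m\|z-z'\|_2 .
\]
Exchanging the roles of $z$ and $z'$ gives the reverse inequality, so $|G(z)-G(z')|\leq L_m\|z-z'\|_2$. The maximum is always attained because the index set $[C]\setminus\{t\}$ is finite, and it is nonempty whenever $C\geq2$.

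I do not expect a genuine obstacle. The only point needing care is that the Lipschitz property is only local, holding on $\mathcal N$. The argument compares scores at exactly the two points $z,z'\in\mathcal N$ and never uses a segment joining them, so convexity of $\mathcal N$ is not required. The constant $2L_m$ is what the later results rely on, namely Proposition~\ref{prop:anchor-activation}, Theorem~\ref{thm:family-activation}, and the factor $2L_m$ in $\zeta_m$.
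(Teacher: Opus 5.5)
Your proof is correct and matches the paper's argument: the paper also splits $M_{m,t}$ with the triangle inequality and bounds the max term using the elementary inequality $|\max_j a_j-\max_j b_j|\leq\max_j|a_j-b_j|$, which your argmax comparison (applied in both directions) establishes directly. Your remark that only the two points $z,z'\in\mathcal{N}$ are used, so $\mathcal{N}$ need not be convex, is also accurate.
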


\begin{proof}
Take any $z,z'\in\mathcal{N}$. We have
\begin{align*}
    |M_{m,t}(z)-M_{m,t}(z')|
    &\leq
    |s_{m,t}(z)-s_{m,t}(z')|
    \\
    &\quad+
    \left|
        \max_{j\neq t}s_{m,j}(z)
        -
        \max_{j\neq t}s_{m,j}(z')
    \right|.
\end{align*}
The first term is at most $L_m\|z-z'\|_2$. For the second term, the elementary inequality
\begin{equation*}
    \left|\max_j a_j-\max_j b_j\right|
    \leq
    \max_j|a_j-b_j|
\end{equation*}
implies
\begin{equation*}
    \left|
        \max_{j\neq t}s_{m,j}(z)
        -
        \max_{j\neq t}s_{m,j}(z')
    \right|
    \leq
    L_m\|z-z'\|_2.
\end{equation*}
Adding both bounds yields
\begin{equation*}
    |M_{m,t}(z)-M_{m,t}(z')|
    \leq
    2L_m\|z-z'\|_2.
\end{equation*}
\end{proof}

\subsection{Proof of Proposition~\ref{prop:anchor-activation}}

\begin{proof}
Let
\begin{equation*}
    \mathcal{E}_A
    =
    \left\{
        \|Z_v(x,\theta_A)-\mu_v(\theta_A)\|_2
        \leq
        r_v(\theta_A,\delta_A)
    \right\}.
\end{equation*}
By the definition of the concentration radius in Eq.~\eqref{eq:radius},
\begin{equation*}
    \Prb_x[\mathcal{E}_A]\geq 1-\delta_A.
\end{equation*}
On $\mathcal{E}_A$, Proposition~\ref{prop:margin-lipschitz} gives
\begin{align*}
    M_{v,t}(Z_v(x,\theta_A))
    &\geq
    M_{v,t}(\mu_v(\theta_A))
    \\
    &\quad-
    2L_v
    \|Z_v(x,\theta_A)-\mu_v(\theta_A)\|_2
    \\
    &\geq
    \gamma_{v,A}
    -
    2L_v r_v(\theta_A,\delta_A).
\end{align*}
The right-hand side is positive by Eq.~\eqref{eq:anchor-condition}. A positive target margin means that the target score exceeds every non-target score. Therefore,
\begin{equation*}
    f_v^\dagger(T_{\theta_A}(x))=t
\end{equation*}
for every $x$ in $\mathcal{E}_A$. It follows that
\begin{equation*}
    \asr_A
    \geq
    \Prb_x[\mathcal{E}_A]
    \geq
    1-\delta_A.
\end{equation*}
\end{proof}

\subsection{Proof of Proposition~\ref{prop:parameter-stability}}

\begin{proof}
Let $\theta=\theta_A+U\alpha$ with $\|\alpha\|_2\leq\eta$. Since $U$ has orthonormal columns, $\|\theta-\theta_A\|_2=\|U\alpha\|_2=\|\alpha\|_2\leq\eta$. Assumption~\ref{ass:trigger-manifold} and the mean-value theorem imply
\begin{equation*}
    \|g(\theta)-g(\theta_A)\|_2
    \leq
    M_g\|\theta-\theta_A\|_2
    \leq
    M_g\eta.
\end{equation*}
The Lipschitz property of the application operator gives
\begin{equation*}
\begin{aligned}
    \|T_\theta(x)-T_{\theta_A}(x)\|_2
    &\leq
    L_B\|g(\theta)-g(\theta_A)\|_2
    \\
    &\leq
    L_BM_g\eta.
\end{aligned}
\end{equation*}
Applying local Lipschitz continuity of $\phi_m$ yields
\begin{equation*}
\begin{aligned}
    \|Z_m(x,\theta)-Z_m(x,\theta_A)\|_2
    &\leq
    L_{\phi,m}
    \|T_\theta(x)-T_{\theta_A}(x)\|_2
    \\
    &\leq
    L_{\phi,m}L_BM_g\eta
    \\
    &=C_m\eta.
\end{aligned}
\end{equation*}
This proves Eq.~\eqref{eq:pointwise-stability}.

We next bound the center displacement. Jensen's inequality gives
\begin{align*}
    a_m(\theta;\theta_A)
    &=
    \left\|
        \E_x[Z_m(x,\theta)-Z_m(x,\theta_A)]
    \right\|_2
    \\
    &\leq
    \E_x
    \|Z_m(x,\theta)-Z_m(x,\theta_A)\|_2
    \\
    &\leq
    C_m\eta.
\end{align*}
This proves Eq.~\eqref{eq:center-bound}.

Let
\begin{equation*}
    \mathcal{E}_{m,A}
    =
    \left\{
        \|Z_m(x,\theta_A)-\mu_m(\theta_A)\|_2
        \leq
        r_m(\theta_A,\delta_A)
    \right\}.
\end{equation*}
The event has probability at least $1-\delta_A$. On this event, the triangle inequality gives
\begin{align*}
    \|Z_m(x,\theta)-\mu_m(\theta)\|_2
    &\leq
    \|Z_m(x,\theta)-Z_m(x,\theta_A)\|_2
    \\
    &\quad+
    \|Z_m(x,\theta_A)-\mu_m(\theta_A)\|_2
    \\
    &\quad+
    \|\mu_m(\theta_A)-\mu_m(\theta)\|_2
    \\
    &\leq
    C_m\eta
    +r_m(\theta_A,\delta_A)
    +C_m\eta.
\end{align*}
Hence a radius of $r_m(\theta_A,\delta_A)+2C_m\eta$ contains the variant representations with probability at least $1-\delta_A$. By the minimality in Eq.~\eqref{eq:radius},
\begin{equation*}
    r_m(\theta,\delta_A)
    \leq
    r_m(\theta_A,\delta_A)+2C_m\eta.
\end{equation*}
Combining this bound with Eq.~\eqref{eq:center-bound} yields
\begin{align*}
    \rho_m(\theta,\delta_A)
    &=
    a_m(\theta;\theta_A)+r_m(\theta,\delta_A)
    \\
    &\leq
    r_m(\theta_A,\delta_A)+3C_m\eta.
\end{align*}
This proves Eqs.~\eqref{eq:radius-bound} and~\eqref{eq:reach-bound}.
\end{proof}

\subsection{Proof of Theorem~\ref{thm:family-activation}}

\begin{proof}
Fix a family member $\theta$. Let
\begin{equation*}
    \mathcal{E}_\theta
    =
    \left\{
        \|Z_v(x,\theta)-\mu_v(\theta)\|_2
        \leq
        r_v(\theta,\delta(\theta))
    \right\}.
\end{equation*}
By Eq.~\eqref{eq:radius},
\begin{equation*}
    \Prb_x[\mathcal{E}_\theta]
    \geq
    1-\delta(\theta).
\end{equation*}
On this event,
\begin{align*}
    \|Z_v(x,\theta)-\mu_v(\theta_A)\|_2
    &\leq
    \|Z_v(x,\theta)-\mu_v(\theta)\|_2
    \\
    &\quad+
    \|\mu_v(\theta)-\mu_v(\theta_A)\|_2
    \\
    &\leq
    r_v(\theta,\delta(\theta))
    +a_v(\theta;\theta_A)
    \\
    &=
    \rho_v(\theta,\delta(\theta)).
\end{align*}
Proposition~\ref{prop:margin-lipschitz} then gives
\begin{align*}
    M_{v,t}(Z_v(x,\theta))
    &\geq
    M_{v,t}(\mu_v(\theta_A))
    \\
    &\quad-
    2L_v
    \|Z_v(x,\theta)-\mu_v(\theta_A)\|_2
    \\
    &\geq
    \gamma_{v,A}
    -
    2L_v\rho_v(\theta,\delta(\theta)).
\end{align*}
The definition in Eq.~\eqref{eq:risk-ratio} implies that the final expression is positive whenever
\begin{equation*}
    \zeta_v(\theta,\delta(\theta))<1.
\end{equation*}
For such a family member, the prediction equals $t$ on $\mathcal{E}_\theta$. Therefore,
\begin{equation*}
\begin{aligned}
    \Prb_x
    \bigl[f_v^\dagger(T_\theta(x))=t\bigr]
    \geq
    \bigl(1-\delta(\theta)\bigr)
    \ind\bigl\{
        \zeta_v(\theta,\delta(\theta))<1
    \bigr\}.
\end{aligned}
\end{equation*}
Taking expectation over $\theta\sim Q_{\mathrm{te}}$ proves Eq.~\eqref{eq:family-lower-bound}.
\end{proof}

\subsection{Proof of Corollary~\ref{cor:radius-family}}

\begin{proof}
Proposition~\ref{prop:parameter-stability} gives, for every $\theta\in\Theta_A(\eta)$,
\begin{equation*}
    \rho_v(\theta,\delta_A)
    \leq
    r_v(\theta_A,\delta_A)+3C_v\eta.
\end{equation*}
Equation~\eqref{eq:radius-family-condition} therefore implies
\begin{equation*}
    2L_v\rho_v(\theta,\delta_A)
    <
    \gamma_{v,A},
\end{equation*}
which is equivalent to $\zeta_v(\theta,\delta_A)<1$. Theorem~\ref{thm:family-activation}, applied with the constant failure probability $\delta(\theta)=\delta_A$, then yields
\begin{equation*}
    \asr_F(Q_{\mathrm{te}})
    \geq
    \E_{\theta\sim Q_{\mathrm{te}}}[1-\delta_A]
    =
    1-\delta_A.
\end{equation*}
\end{proof}

\subsection{Proof of Corollary~\ref{cor:uniform-family}}

\begin{proof}
The uniform assumption implies
\begin{equation*}
    \ind\bigl\{
        \zeta_v(\theta,\bar\delta)<1
    \bigr\}=1
\end{equation*}
for $Q_{\mathrm{te}}$-almost every $\theta$. Applying Theorem~\ref{thm:family-activation} with $\delta(\theta)=\bar\delta$ yields
\begin{equation*}
    \asr_F(Q_{\mathrm{te}})
    \geq
    \E_{\theta\sim Q_{\mathrm{te}}}[1-\bar\delta]
    =
    1-\bar\delta.
\end{equation*}
Since $\asr_A\leq 1$,
\begin{align*}
    \tgg(Q_{\mathrm{te}})
    &=
    [\asr_A-\asr_F(Q_{\mathrm{te}})]_+
    \\
    &\leq
    [1-(1-\bar\delta)]_+
    \\
    &=\bar\delta.
\end{align*}
\end{proof}

\subsection{Proof of Corollary~\ref{cor:surrogate-transfer}}

\begin{proof}
For $Q_{\mathrm{te}}$-almost every $\theta$, Assumption~\ref{ass:geometry-transfer} gives
\begin{equation*}
    \zeta_v(\theta,\bar\delta)
    \leq
    \zeta_s(\theta,\bar\delta)+\kappa_{\mathrm{tr}}.
\end{equation*}
Equation~\eqref{eq:surrogate-screening} implies
\begin{equation*}
    \zeta_s(\theta,\bar\delta)+\kappa_{\mathrm{tr}}<1
\end{equation*}
for $Q_{\mathrm{te}}$-almost every $\theta$. Hence
\begin{equation*}
    \esssup_{\theta\sim Q_{\mathrm{te}}}
    \zeta_v(\theta,\bar\delta)<1.
\end{equation*}
Corollary~\ref{cor:uniform-family} then gives
\begin{equation*}
    \asr_F(Q_{\mathrm{te}})\geq 1-\bar\delta.
\end{equation*}
\end{proof}

\end{document}